%%%%%%%%%%%%%%%%%%%%%%% file template.tex %%%%%%%%%%%%%%%%%%%%%%%%%
%
% This is a general template file for the LaTeX package SVJour3
% for Springer journals.          Springer Heidelberg 2010/09/16
%
% Copy it to a new file with a new name and use it as the basis
% for your article. Delete % signs as needed.
%
% This template includes a few options for different layouts and
% content for various journals. Please consult a previous issue of
% your journal as needed.
%
%%%%%%%%%%%%%%%%%%%%%%%%%%%%%%%%%%%%%%%%%%%%%%%%%%%%%%%%%%%%%%%%%%%
%
% First comes an example EPS file -- just ignore it and
% proceed on the \documentclass line
% your LaTeX will extract the file if required
%\begin{filecontents*}{example.eps}
%!PS-Adobe-3.0 EPSF-3.0
%%BoundingBox: 19 19 221 221
%%CreationDate: Mon Sep 29 1997
%%Creator: programmed by hand (JK)
%%EndComments
%gsave
%newpath
  %20 20 moveto
 % 20 220 lineto
  %220 220 lineto
  %220 20 lineto
%closepath
%2 setlinewidth
%gsave
 % .4 setgray fill
%grestore
%stroke
%grestore
%\end{filecontents*}
%
\RequirePackage{fix-cm}
\documentclass[smallextended]{svjour3}       % onecolumn (second format)
\smartqed  % flush right qed marks, e.g. at end of proof
\usepackage{graphicx}
\usepackage{amsfonts}
\usepackage{amsmath}
 \usepackage{mathptmx}      % use Times fonts if available on your TeX system
%
% insert here the call for the packages your document requires
\usepackage{latexsym}
% etc.
%
% please place your own definitions here and don't use \def but
% \newcommand{}{}
%
% Insert the name of "your journal" with
% \journalname{myjournal}
%
\begin{document}

\title{Maximization of Relative Social Welfare on Truthful Cardinal Voting Schemes%\thanks{Grants or other notes
%about the article that should go on the front page should be
%placed here. General acknowledgments should be placed at the end of the article.}
}
%\subtitle{Do you have a subtitle?\\ If so, write it here}

%\titlerunning{Short form of title}        % if too long for running head

\author{Sinya Lee
}

%\authorrunning{Short form of author list} % if too long for running head

\institute{Anthony Sinya Lee\at
              Institute for Interdisciplinary Information Sciences, Tsinghua University, Beijing 100084, China \\
              \email{sinya8@gmail.com}    
}

\date{Dec 11, 2013 / Revised: Mar 31, 2019}
% The correct dates will be entered by the editor

\maketitle

\begin{abstract}
Consider the the problem of maximizing the relative social welfare of truthful single-winner voting schemes with cardinal preferences compared to the classical range voting scheme. The range voting scheme is a simple and straightforward mechanism which deterministically maximizes the social welfare. However, the scheme that is known to be non-truthful and we studied the truthful mechanism that maximize the ratio of its expected social welfare to the social welfare achieved by the range voting scheme. We provide a scheme which achieve a ratio of $\Omega(m^{-2/3})$ in this paper. It is proved that this bound is tight asymptotically and it is impossible to find a better voting scheme.
\keywords{mechanism design \and voting \and cardinal voting \and social choice \and social welfare}
% \PACS{PACS code1 \and PACS code2 \and more}
% \subclass{MSC code1 \and MSC code2 \and more}
\end{abstract}

\section{Introduction}
\label{intro}
A single-winner voting scheme is a function that takes the preference profile of $n$ voters on $m$ candidates as input and generates a single winner. Single-winner voting schemes are widely used in real life from the election of the US president to the decision of party date in a club. Traditionally scientists focus on the setting that the preference of each voter is a total order on the candidates. The classical result by Gibbard \cite{Gib1973} and and Satterthwaite  \cite{Sat1975} shows that under such setting there doesn't exist any non-dictatorial and truthful scheme with at least 3 candidates possible to win the election. Scientists are working on the problem of manipulation of such voting schemes such as the recent papers \cite{Fri2008} and \cite{Xia2008}.

In this paper we study the voting problem under cardinal preference setting. That is, for each voter $i \in N$, its preference is a function $u_i: M \to \mathbb [b_1, b_2]$ where $N = \{1,\dots, n\}$ is the voter set, $M = \{1,\dots, m\}$ is the candidate set and $b_1 < b_2$ are the lower and upper bounds of possible ``scores'' for the candidates. Hence the preference profile $u$ can be defined as $\mathbf u = (u_1,\dots, u_n)$. Under such cardinal preference setting, we are able to define the cardinal social welfare of choosing candidate $j$ and  $\operatorname{Wel}(j, \mathbf u) = \sum_{i=1}^n u_i(j)$.

A simple and straightforward mechanism is the \textit{range voting} scheme $RV$. The range voting scheme is the scheme that the winner of the election is the candidate $j$ with maximized $\operatorname{Wel}(j, \mathbf u)$. That is, the range voting scheme always returns the candidate with highest total score.

It is obvious that the range voting scheme maximizes the social welfare. However, the range voting is not truthful. It fails as the way that the general ordinal scheme fails. When the top 2 candidates has a tie (or tiny difference), it is possible and motivated for some voters to manipulate the result if they prefer the second top candidate to the top candidate. Thus the range voting scheme is not truthful.

Therefore, it is natural to study how much social welfare we can get by a truthful mechanism compared to the non-truthful range voting scheme.

Since the preference function $u_i$ is always well-defined and preserve the preference ordering of the candidates under positive affine transformation, we may regard $x \to au_i(x) + b$ as another description of $x \to u_i(x)$ for any $a>0$ and $b$. Therefore we can apply an different positive affine transformation to each voter's preference function and we can make the following assumption.

\spnewtheorem{assumption}{Assumption}{\bf}{\it}

\begin{assumption}[Preference Assumption]
\label{ass}
The preference function $u_i$ of each voter $i \in N$ is a function from $M$ to $[0,1]$ such that there exists some $j_0$ and $j_1$ in $M$ such that $u_i(j_0) = 0$ and $u_i(j_1) = 1$.
\end{assumption}

With this assumption, for any $i\in N$ and $j \in M$, $u_i(j)\ge 0$. Also, there exists some voter $i^* \in N$ and candidate $j^* \in M$ such that $i$'s preferences on $j$, $u_{i^*}(j^*) > 0$. Therefore we have the social welfare $\operatorname{Wel}(RV(\mathbf u), \mathbf u) \ge \operatorname{Wel}(j^*, \mathbf u) > 0$. Hence it is possible and reasonable to define the relative social welfare or welfare ratio for any mechanism $J$ as
$$
\operatorname{ratio}(J) = \inf_{\mathbf u} \frac{\mathbf E[\operatorname{Wel}(J(\mathbf u), \mathbf u)]}{\operatorname{Wel}(RV(\mathbf u), \mathbf u)}
$$

Here we show that this assumption is sufficient to make the ratio function well defined. Furthermore, we will show that this assumption is necessary to make the problem interesting enough to study, which we be further discussed in Section \ref{dis}.

A less general version of this problem is raised in Filos-Ratsikas and Miltersen's recent paper \cite{Rat2013}, except they have one more unnecessary assumption requiring that there can't be any ties in each voter $i$'s preference function. In their paper they provide a truthful mechanism with a welfare ratio of $\Omega(m^{-3/4})$. They also provide a negative result for ordinal mechanisms (mechanisms that depends only on the candidates' ranking of each voter) stating that any ordinal truthful mechanism can only achieve a welfare ratio of $O(m^{-2/3})$.

In this paper, we improve the result in \cite{Rat2013}. We will show a truthful mechanism with welfare ratio $\Omega(m^{-2/3})$, much better than the result in \cite{Rat2013}. The formal statement and proof of this result will be in Section \ref{res1}. This mechanism cannot be further improved asymptotically due to the upper bound proved in \cite{Rat2013}. In additional, the result above require the assumption requiring all preference functions to have no tie. Therefore my results are more general and universal than the results in \cite{Rat2013}.

\section{Preliminaries and Notation}
\label{pre}

\subsection{Voting Systems}
\label{preVote}

Let $N = \{1,\dots, n\}$ denote the set of voters or agents. Let $M = \{1,\dots, m\}$ denote the set of candidates or alternatives.

A preference is a function $u : M \to [0,1]$ such that $u(j_0) = 0$ and $u(j_1) = 1$ for some $j_0$ and $j_i$ in $M$. Let $V$ denote the set of all possible preference functions.

A preference profile $\mathbf u \in V^n$ is a $n$-tuple of preference functions $(u_1,\dots, u_n)$. Also for any preference profile $\mathbf u = (u_1,\dots, u_n)$ and any candidate $j \in M$, we define $\mathbf u(j) = \sum_{i=1}^n u_i(j)$ for convenience.

A voting scheme or mechanism is a randomized function $J: V^n \to M$. That is, a randomized function that generate a single winner from a preference profile. We can also regard a voting scheme $J$ as a function from $V^n$ to a distribution over $M$, say, $\Delta(M)$. Moreover, for a mechanism $J$ that already known to be deterministic, we can simply regard $J$ as a deterministic function from $V^n$ to $M$. Throughout this paper we may use $J$ under any one of the three different definitions above in the most convenient way. Since the exact definition of $J$ will be clear in the context, we don't need to worry about the ambiguity. Let $\mathbf{Mech}$ denote the set of all possible voting schemes.

\subsection{Classification of Voting Schemes}
\label{preCla}

We say a voting scheme $J \in \mathbf{Mech}$ to be truthful or strategy-proof if for any preference profile $\mathbf u = (u_1,\dots,u_n) \in V^n$, any voter $i$ and any preference function $u_i'$, we have $u_i(J(u_i, u_{-i})) \ge u_i(J(u_i', u_{-i}))$. Let $\mathbf{Mech^T}$ denote the set of all truthful mechanisms in $\mathbf{Mech}$.

For any two preference functions $u, u' \in V$, we say $u$ and $u'$ are ordinal equivalent (denoted by $u \sim u'$) if for any $j_0, j_1 \in M$, $u(j_0) > u(j_1)$ if and only if $u'(j_0) > u'(j_1)$. That is, whether we order the every candidate $j \in M$ by $u(j)$ or by $u'(j)$, the ordering will be the same (and the ties will also be preserved moreover).

For any $i \in N$ and two preference profile $\mathbf u, \mathbf u' \in V^n$ where $\mathbf u = (u_1, \dots, u_n)$ and $\mathbf u' = (u'_1,\dots, u'_n)$, we say $\mathbf u$ and $\mathbf u'$ $i$-ordinal equivalent (dented by $\mathbf u \overset{i}{\sim} \mathbf u'$) if $u_i \sim u_i'$ and $u_{j} = u_j'$ for every $j \ne i$. We say $\mathbf u$ and $\mathbf u'$ ordinal equivalent (denoted by $\mathbf u \sim \mathbf u'$) if $\bigwedge_{i=1}^m (u_i \sim u_i')$.

For a voting scheme $J \in \mathbf{Mech}$ and any $i \in N$, we say $J$ is $i$-ordinal if $J(\mathbf u) = J(\mathbf u')$ for any $\mathbf u, \mathbf u' \in V^n$ with $\mathbf u \overset{i}{\sim} \mathbf u'$. For a voting scheme $J \in \mathbf{Mech}$, we say $J$ is ordinal if $J(\mathbf u) = J(\mathbf u')$ for any $\mathbf u, \mathbf u' \in V^n$ with $\mathbf u \sim \mathbf u'$. Let $\mathbf{Mech}^{i\textbf{-O}}$ denote the set of all $i$-ordinal mechanisms in $\mathbf{Mech}$ for any $i \in N$. Similarly, let $\mathbf{Mech^O}$ be the set of all ordinal mechanisms in $\mathbf{Mech}$.

We say a voting scheme $J \in \mathbf{Mech}$ is unilateral if there exists some $i \in N$ such that, $J(u_i, u_{-i}) = J(u_i, u_{-i}')$ for any $u_i \in V$ and $u_{-i}, u_{-i}' \in V^{n-1}$. Let $\mathbf{Mech^{1}} \subseteq \mathbf{Mech}$ denote the set of all unilateral mechanisms.

We say a voting scheme $J \in \mathbf{Mech}$ is duple if $J \in \mathbf{Mech^O}$ and there exists some different $i_0, i_1 \in N$ such that $J(u_{i_0}, u_{i_1}, u_{-i_0, -i_1}') = J(u_{i_0}, u_{i_1}, u_{-i_0, -i_1}')$ for any $u_{i_0}, u_{i_1} \in V$ and $u_{-i_0, -i_1}, u_{-i_0, -i_1}' \in V^{n-2}$. Let $\mathbf{Mech^{2}} \subseteq \mathbf{Mech^O}$ denote the set of all unilateral mechanisms.

We say a scheme $J \in \mathbf {Mech}$ is neutral if the scheme is fair to the candidates, i.e., the names of the candidates don't matter. Formally, $J$ is neutral if for any permutation $\sigma$ over $M$ and any $\mathbf u = (u_1, \dots, u_n) \in V^n$, we have $\sigma(J(\mathbf u)) = J(u_1 \circ \sigma, \dots, u_1 \circ \sigma)$. Let $\mathbf {Mech^N}$ be the set of all neutral mechanisms.

Similarly we say a scheme $J \in \mathbf{Mech}$ is anonymous if it is fair to the voters, i.e., the names of the voters don't matter. Formally, $J$ is anonymous if for any permutation $\sigma$ over $N$ and any $\mathbf u = (u_1, \dots, u_n) \in V^n$, we have $J(\mathbf u) = J(u_{\sigma(1)},\dots, u_{\sigma(n)})$. Let $\mathbf{Mech}^A$ be the set of all anonymous mechanisms.

\subsection{With and Without Ties}

Define $U \subseteq V$  be the set of preference functions without ties. Or more formally, $U = \{u \in V: \textrm{$u(j) \ne u(j')$ for any $j \ne j'$}\}$.

Therefore $U^n$ are the set of preference profiles without ties. Then we can define $\mathbf {Mech_{U}}$ to be the set of every voting scheme $J$ which takes input in $U^n$. Hence we can define $\mathbf{Mech_U^T}, \mathbf {Mech}_\mathbf{U}^{i\textbf{-O}}, \mathbf{Mech_U^O}, \mathbf{Mech_U^1}, \mathbf{Mech_U^2}, \mathbf{Mech_U^N}, \mathbf{Mech_U^A} \subseteq \mathbf{Mech_U}$ be the sets of truthful, $i$-ordinal, ordinal, unilateral, duple, neutral and anonymous mechanisms in $\mathbf{Mech_U}$ in the obvious way.

Notice that for any $J \in \mathbf {Mech}$, we can induce $J_U \in \mathbf{Mech_U}$ from $J$ naturally by setting $J_U(\mathbf u) = J(u)$ for any $\mathbf u \in U^n$. And then we have the following lemmas:

\begin{lemma}
\label{lemU}
If $J \in \mathbf {Mech^C}$, then $J_U \in \mathbf {Mech^C_U}$, where $\mathbf C$ can be $\mathbf{T}, i\textbf-\mathbf O, \mathbf O, 1, 2, \mathbf N, \mathbf A$.
\end{lemma}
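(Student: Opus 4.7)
The plan is to prove the lemma by considering each choice of $\mathbf{C}$ separately and exploiting a common structural pattern: every one of the properties $\mathbf{T}$, $i\textbf{-}\mathbf{O}$, $\mathbf{O}$, $\mathbf{1}$, $\mathbf{2}$, $\mathbf{N}$, $\mathbf{A}$ is defined by a universally quantified statement over preference profiles in $V^n$, possibly together with an auxiliary report in $V$ or a permutation. Since $U^n \subseteq V^n$ and $U \subseteq V$, any such universal statement that holds for $J$ restricts immediately to inputs drawn from $U^n$; and because $J_U(\mathbf{u}) = J(\mathbf{u})$ on $U^n$ by definition, this restriction is exactly the desired property of $J_U$.

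For $\mathbf{T}$, $i\textbf{-}\mathbf{O}$, $\mathbf{O}$, $\mathbf{1}$, and $\mathbf{2}$, the verification is a pure embedding argument. For instance, in the truthful case I would fix $\mathbf{u} \in U^n$, $i \in N$, and $u_i' \in U$, regard them as elements of $V^n$ and $V$, invoke truthfulness of $J$ to obtain $u_i(J(u_i, u_{-i})) \ge u_i(J(u_i', u_{-i}))$, and rewrite $J = J_U$ on these inputs. The $i$-ordinal, ordinal, unilateral, and duple cases follow the identical pattern, with the universally quantified $\mathbf{u}, \mathbf{u}'$ (or the $(n-1)$- or $(n-2)$-tuples in the unilateral/duple definitions) simply ranging over their no-tie subsets.

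For $\mathbf{N}$ and $\mathbf{A}$ one additionally needs $U^n$ to be closed under the transformation appearing in the property, so that $J_U$ is defined on the transformed profile and the identity makes sense for $J_U$: for neutrality, $u_i \circ \sigma$ merely relabels candidates and hence preserves the no-tie property, so $(u_1 \circ \sigma, \dots, u_n \circ \sigma) \in U^n$ whenever $\mathbf{u} \in U^n$; for anonymity, permuting voters trivially keeps every coordinate in $U$. Once this closure is noted, the neutral/anonymous identity for $J$ evaluated at an input in $U^n$ reads off as the corresponding identity for $J_U$.

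I do not anticipate any substantive obstacle. The proof amounts to a routine case enumeration in which each of the seven classes is dispatched by one line of restriction, supplemented in the neutrality and anonymity cases by the trivial observation that $U^n$ is closed under voter permutation and candidate relabeling. The only thing to be careful about is that one consistently checks the transformed profile actually lands in $U^n$ before appealing to the definition of $J_U$.
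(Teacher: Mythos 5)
Your proposal is correct and follows essentially the same route as the paper, whose proof is simply to plug in $J_U(\mathbf u) = J(\mathbf u)$ on $U^n$ and check each definition; your version just spells out the restriction argument and the (correct, if routine) closure observations for neutrality and anonymity.
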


\begin{proof}
Just by plugging in that $J_U(\mathbf u) = J(\mathbf u)$ for every $u \in U^n$ and checking the definition of each $\mathbf {Mech^C_U}$, we can prove the correctness of the five lemmas above.
\end{proof}

\subsection{Some Mechanisms}
\label{preMec}

Define $RV \in \mathbf{Mech}$ to be the range voting scheme. That is, the voting scheme such that the winner $RV(\mathbf u)$ is always the candidate $j^*$ with maximized $\operatorname{Wel}(j^*, u)$

For any $q \in \{1,\dots, m\}$, let $J^{1,q} \in \mathbf{Mech^T}$ be the mechanism that choose a voter $i$ uniformly at random, and then choose a winner uniformly at random from voter $i$'s $q$ most favorited candidates. Here we break ties by the name of the candidates. That is, if some voter $i \in N$ equally like candidate $j_0, j_1 \in M$ with $j_0 < j_1$, then we just assume that voter $i$ likes $j_0$ better.

For any $q \in \{\lfloor n/2\rfloor + 1,\dots, n+1\}$, let $J^{2,q} \in \mathbf{Mech^T}$ be the mechanism that choose two different candidates uniformly at random, then let every voter vote for on of the two candidate he likes better. If there exists one candidate gets at least $q$ votes, let he/she be the winner. Otherwise flip a coin and pick a random winner from the two candidates we chosen. Similarly we break ties by name of the candidate.

Let random-favorite mechanism be the nickname of $J^{1,1}$.

Observe that even though $J^{1,q}$'s and $J^{2,q}$'s are not neutral, they're no-tie versions $J^{1,q}_U$'s and $J^{2,q}_U$'s are anonymous as well as neutral.

Define $J^* = \frac{1}{2} J^{1,1} + \frac{1}{2} J^{1, \lfloor m^{1/3}\rfloor } \in \mathbf {Mech^T}$. We will later show that $J^*$ is a mechanism with $\operatorname{ratio}(J)  = \Omega(m^{-2/3})$.

\subsection{Social Welfare}
\label{preSoc}

For any preference profile $\mathbf u = (u_1,\dots, u_n) \in V^n$ and any winner $j \in M$, define the social welfare $$\operatorname{Wel}(j, \mathbf u) = \sum_{i=1}^{n} u_i(j)$$.

For any mechanism $J \in \mathbf {Mech}$ and preference profile $\mathbf u \in V^n$, define the ratio of $J$ on input $\mathbf u$ to be $$\operatorname{ratio}(J, \mathbf u) = \frac{\mathbf E[\operatorname{Wel}(J(u), u)]}{\operatorname{Wel}(RV(u), u)}$$

For any mechanism $J \in \mathbf {Mech}$, we define the ratio or the relative social welfare of $J$ to be 
$$\operatorname{ratio}(J) = \inf_{\mathbf u \in V^n} \operatorname{ratio}(J, \mathbf u)$$

Similarly, for $J \in \mathbf{Mech_U}$, we can define 
$$
\operatorname{ratio_U}(J) =\inf_{\mathbf u \in U^n}  \frac{\mathbf E[\operatorname{Wel}(J(u), u)]}{\operatorname{Wel}(RV_U(u), u)}
$$

\subsection{Negative Result for Ordinal Mechanisms without Ties}

In this subsection I will show the proof of negative result for truthful ordinal mechanism where ties are not allowed, stating that all such mechanisms can only achieve a ratio of $O(m^{-2/3})$. This original proof of this statement can be find in \cite{Rat2013}.

\begin{theorem}\cite{Gib1977}
\label{Gib2}
For any $J \in \mathbf{Mech_U^O} \cap \mathbf{Mech_U^T}$, $J$ is a convex combination of mechanisms in $\mathbf{Mech_U^1} \cap \mathbf{Mech_U^O} \cap \mathbf{Mech_U^T}$ and mechanisms in $\mathbf{Mech_U^2} \cap \mathbf{Mech_U^T}$.
\end{theorem}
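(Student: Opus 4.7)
The plan is to follow Gibbard's original decomposition argument, which rests on two pillars: a convex-polytope formulation of the set of truthful ordinal mechanisms, and an extreme-point characterization.

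First, since a mechanism $J \in \mathbf{Mech_U^O}$ depends only on the (strict) ordinal profile, it can be encoded as a vector in $\mathbb{R}^{(m!)^n \cdot m}$ recording $\Pr[J(\mathbf u) = j]$ for every profile $\mathbf u \in U^n$ and candidate $j \in M$. The probability axioms (nonnegativity and summation to one) are linear in these coordinates, and the truthfulness requirement $\sum_{j} u_i(j)(\Pr[J(u_i,u_{-i}) = j] - \Pr[J(u_i', u_{-i}) = j]) \ge 0$ reduces, on $U^n$, to a finite family of linear inequalities via the standard reduction to adjacent-swap deviations in the ranking. Consequently $\mathbf{Mech_U^O} \cap \mathbf{Mech_U^T}$ is a convex polytope in a finite-dimensional space, and Minkowski's theorem guarantees that every member is a finite convex combination of its extreme points. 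So it suffices to show every extreme point lies in $\mathbf{Mech_U^1} \cap \mathbf{Mech_U^O} \cap \mathbf{Mech_U^T}$ or in $\mathbf{Mech_U^2} \cap \mathbf{Mech_U^T}$.

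Second, I would prove this extreme-point characterization by contrapositive. Suppose $J$ is an extreme point that is neither unilateral nor duple. I would construct a nonzero signed perturbation $\Delta : U^n \to \mathbb{R}^M$ with $\sum_j \Delta(\mathbf u)(j) = 0$ at every profile such that $J \pm \epsilon \Delta \in \mathbf{Mech_U^O} \cap \mathbf{Mech_U^T}$ for some small $\epsilon > 0$, contradicting extremality. The two hypotheses supply the slack needed: not being duple means the support of $J$ contains at least three candidates in some region of profiles, giving room to shift probability mass between two of them without violating nonnegativity; not being unilateral means at least two voters jointly influence the outcome, giving the flexibility to propagate the shift consistently across profiles that differ in only one voter's preference. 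Concretely, I would start by shifting $\epsilon$ mass between a pair $(j,j')$ at a single profile and then propagate the correction through the graph of adjacent profiles using the local monotonicity lemma implied by strategy-proofness on $U^n$ (if a voter swaps two adjacent candidates in his ranking, only the probabilities of those two candidates may change, and only monotonically).

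The main obstacle is this perturbation construction: strategy-proofness ties together probabilities at adjacent profiles through a large web of inequalities, and ensuring that the same $\Delta$ keeps all of them feasible in both signs while remaining nonzero requires a careful combinatorial argument exploiting non-unilaterality and non-dupleness simultaneously. This is the technical core of Gibbard's 1977 paper and where I would expect essentially all of the work to go; the polytope reduction and the final appeal to Minkowski's theorem are routine by comparison. Once the extreme-point characterization is established, the decomposition of an arbitrary $J \in \mathbf{Mech_U^O} \cap \mathbf{Mech_U^T}$ as a convex combination of unilateral-ordinal-truthful and duple-truthful mechanisms is immediate.
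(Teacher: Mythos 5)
The paper does not actually prove this theorem: it is imported verbatim from Gibbard's 1977 paper, with the text explicitly deferring to that reference. So there is no in-paper argument to compare yours against; the only question is whether your sketch constitutes a self-contained proof, and it does not. Your first step (encoding an ordinal mechanism as a point of a finite-dimensional polytope cut out by the probability constraints and the stochastic-dominance form of truthfulness, then invoking Minkowski) is fine and is indeed the standard modern framing. But the entire content of the theorem is the extreme-point characterization, and for that you give only a plan: you assert that a non-unilateral, non-duple extreme point admits a two-sided perturbation $J \pm \epsilon\Delta$ staying inside the polytope, and you describe $\Delta$ only at the level of intuition (``shift mass between two candidates and propagate the correction through adjacent profiles''). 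Making that propagation consistent with every incentive constraint simultaneously, in both signs, is precisely the hard combinatorial core of Gibbard's proof, and you acknowledge yourself that this is ``where essentially all of the work'' goes. A proof that defers its central step to future work is a gap, not a proof.

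One further point worth flagging: you read ``duple'' as ``supported on at most two candidates,'' which is Gibbard's original definition and is what makes your dichotomy (``not duple $\Rightarrow$ three candidates in the support somewhere'') meaningful. The paper, however, defines duple as depending on only two voters' preferences (its displayed condition is in fact a tautological typo, but that is the evident intent). Under the paper's literal definition your case split is not the negation of the right property, so before executing the perturbation argument you would first need to reconcile which notion of duple the theorem is actually about; the subsequent use of the mechanisms $J^{2,q}_U$ suggests the two-candidate reading is the intended one.
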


The proof of Theorem \ref{Gib2} is far too long to fit in a short paper like this. You can find the proof in Allan Gibbard's classical paper \cite{Gib1977} published long long ago.

\begin{theorem} \cite{Rat2013}
\label{Bar1}
For any $J \in \mathbf{Mech_U^A} \cap \mathbf{Mech_U^N} \cap \mathbf{Mech_U^O} \cap \mathbf{Mech_U^1} \cap \mathbf{Mech_U^T}$, $J$ is a convex combination of some mechanisms $J_U^{1,q}$ for $q \in \{1,\dots, m\}$. For any $J \in \mathbf{Mech_U^A} \cap \mathbf{Mech_U^N} \cap \mathbf{Mech_U^O} \cap \mathbf{Mech_U^T}$ such that $J$ can be obtained by convex combinations of duple mechanisms, we have $J$ is a convex combination of mechanisms $J_U^{2,q}$ for $q \in \{\lfloor n/2\rfloor+1, \dots, n+1\}$.
\label{Rat1}
\end{theorem}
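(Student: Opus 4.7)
The plan is to characterize each subclass by combining anonymity, neutrality, and ordinality to reduce the mechanism to a simple combinatorial object, using truthfulness to constrain it, and decomposing the constrained object into extreme points via Abel summation.

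For part~1, I interpret the hypothesis as including the implicit assumption (from the Gibbard decomposition in Theorem~\ref{Gib2}) that $J$ is a convex combination of unilateral mechanisms. Symmetrizing over all voter permutations (which leaves $J$ invariant by anonymity) puts $J$ in the form ``pick a voter $i \in N$ uniformly at random, then sample from $G(u_i) \in \Delta(M)$'' for a single map $G : V \to \Delta(M)$. Neutrality of $J$ forces $G$ to be neutral, so the probability that $G(u)$ outputs a candidate depends only on its rank: let $p_r$ be the probability of outputting the $r$-th ranked candidate of $u$. Truthfulness of $J$ passes to truthfulness of $G$, and applying it to an adjacent-rank swap in $u$ yields $p_1 \ge p_2 \ge \cdots \ge p_m \ge 0$. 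Setting $\alpha_q := q(p_q - p_{q+1})$ (with $p_{m+1} := 0$), Abel summation gives $\sum_q \alpha_q = 1$, $\alpha_q \ge 0$, and $\sum_{q \ge r} \alpha_q / q = p_r$. Hence $G$ is a convex combination of the ``uniform over top $q$'' distributions, and averaging over voters yields $J = \sum_q \alpha_q J_U^{1,q}$.

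For part~2, by hypothesis $J = \sum_k \lambda_k J_k$ where each $J_k$ is duple with range contained in some pair $\{a_k, b_k\}$. Symmetrizing over candidate permutations (which leaves $J$ invariant by neutrality) produces the equivalent representation: pick a pair $\{a,b\} \subseteq M$ uniformly at random, then apply a sub-mechanism $J_{a,b}$, all the $J_{a,b}$'s being pairwise conjugate under the corresponding transpositions. Anonymity and ordinality force $J_{a,b}$ to depend only on $k := |\{i : u_i(a) > u_i(b)\}|$; write $f(k) := \Pr[J_{a,b}(\mathbf u) = a]$. The transposition swapping $a \leftrightarrow b$ (neutrality) gives $f(n-k) = 1 - f(k)$, and a voter who swaps $a$ and $b$ in her reported ranking cannot improve her expected utility (truthfulness), which yields that $f$ is non-decreasing on $\{0, \dots, n\}$.

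The final step is to decompose any such $f$ as a convex combination of the threshold functions $f_q$ ($q \in \{\lfloor n/2 \rfloor + 1, \ldots, n + 1\}$) induced by $J^{2,q}$, where $f_q(k) = 0$ for $k \le n - q$, $f_q(k) = 1/2$ for $n - q < k < q$, and $f_q(k) = 1$ for $k \ge q$. Reading off the successive differences of $2 f(k)$ on $\{0, 1, \ldots, \lfloor n/2 \rfloor\}$ gives the coefficients, with non-negativity ensured by the monotonicity and symmetry of $f$. The main obstacle is the symmetrization step in part~2: checking rigorously that neutrality allows re-expressing the duple components as ``uniform pair plus a single symmetric sub-mechanism'' requires a careful permutation-averaging argument; once that canonical form is established, the threshold-function decomposition is routine and parallels the Abel-summation trick used in part~1.
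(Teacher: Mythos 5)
The paper does not actually prove this theorem --- it defers entirely to the cited sources (Gibbard 1977, Barber\`a 1978/1979, Filos-Ratsikas--Miltersen 2013) --- so there is no in-paper argument to compare against; your proposal supplies the standard argument from that literature, and its outline is sound. The unilateral half (symmetrize over voters, reduce to a single neutral one-agent map $G$, use an adjacent-swap manipulation to get $p_1 \ge \cdots \ge p_m$, then Abel summation with $\alpha_q = q(p_q - p_{q+1})$) and the duple half (symmetrize over candidate pairs, reduce each pair mechanism to a monotone function $f(k)$ with $f(n-k) = 1 - f(k)$, then decompose into the threshold functions of $J_U^{2,q}$) are exactly the Barber\`a-style characterizations the paper is invoking. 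Two points deserve explicit care. First, as literally stated the hypothesis $J \in \mathbf{Mech_U^A} \cap \mathbf{Mech_U^1}$ is degenerate --- an anonymous mechanism depending on a single fixed voter's report is constant --- so your reinterpretation (``$J$ is a convex combination of truthful unilateral mechanisms, and the combination is anonymous and neutral,'' as delivered by Theorem~\ref{Gib2}) is not optional; you should state that you are proving the intended statement, not the literal one. Second, in the duple half your monotonicity step applies truthfulness to the individual pair sub-mechanism $J_{a,b}$, but truthfulness of the mixture $J$ does not by itself pass to its components (a misreport shifts utility across many pairs at once); you must take component-wise truthfulness from the Gibbard decomposition as part of the hypothesis, and then check that it survives your two averaging steps --- which it does, since averages of truthful unilateral (resp.\ truthful duple) mechanisms remain truthful by the coordinate-separation (resp.\ common two-point range) argument. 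With those caveats made explicit, the proof is complete and matches the cited route.
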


A outline of proof of Theorem \ref{Rat1} can be found in paper \cite{Rat2013}, which is very similar to a detailed proof to a closely related statement in two papers \cite{Bar1978} and \cite{Bar1979} written in the 1970s. We won't show the proof here either.

\begin{corollary} \cite{Rat2013}
\label{cor}
For any $J \in \mathbf{Mech_U^O} \cap \mathbf{Mech_U^N} \cap \mathbf{Mech_U^A} \cap \mathbf{Mech_U^T}$, $J$ is a convex combination of $J_U^{1,q}$'s for $q \in \{1,\dots, m\}$ and $J_U^{2,q}$'s for $q \in \{\lfloor n/2\rfloor+1,n+1\}$.
\end{corollary}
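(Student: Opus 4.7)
The plan is to combine Theorems \ref{Gib2} and \ref{Rat1} via a symmetrization argument. By Theorem \ref{Gib2}, since $J$ is ordinal and truthful, we may write
$$
J = \lambda J_1 + (1-\lambda) J_2,
$$
where $J_1$ is a convex combination of mechanisms in $\mathbf{Mech_U^1} \cap \mathbf{Mech_U^O} \cap \mathbf{Mech_U^T}$ and $J_2$ is a convex combination of mechanisms in $\mathbf{Mech_U^2} \cap \mathbf{Mech_U^T}$. The goal is to massage each piece into something that Theorem \ref{Rat1} can digest, namely a mechanism that is additionally anonymous and neutral.

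The tool I would use is averaging over the symmetry group $S_N \times S_M$. For any scheme $K$ and permutations $\pi \in S_N$, $\sigma \in S_M$, let $K^{\pi,\sigma}$ denote the scheme obtained by relabeling voters according to $\pi$ and candidates according to $\sigma$. A direct check shows that this action preserves ordinality, truthfulness, unilaterality, and dupleness (since those properties are defined purely in terms of voter and candidate labels that get consistently renamed). Define
$$
\tilde J_k = \frac{1}{|S_N||S_M|} \sum_{\pi \in S_N}\sum_{\sigma \in S_M} J_k^{\pi,\sigma}, \qquad k=1,2.
$$
Each $\tilde J_k$ is then anonymous, neutral, ordinal, truthful, and still a convex combination of unilateral (resp.\ duple) truthful ordinal mechanisms, because each $J_k^{\pi,\sigma}$ individually still lies in the correct class.

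Because $J$ is itself anonymous and neutral, we have $J^{\pi,\sigma} = J$ for every $(\pi,\sigma)$, and so the symmetrization fixes $J$:
$$
J = \lambda \tilde J_1 + (1-\lambda)\tilde J_2.
$$
Now Theorem \ref{Rat1} applies directly: $\tilde J_1$ is a convex combination of the mechanisms $J_U^{1,q}$ for $q \in \{1,\dots,m\}$, and $\tilde J_2$ (which is obtainable from duple mechanisms by convex combination by construction) is a convex combination of the mechanisms $J_U^{2,q}$ for $q \in \{\lfloor n/2 \rfloor +1,\dots,n+1\}$. Substituting these back gives the desired representation of $J$.

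The only genuinely delicate step is verifying that symmetrization keeps each piece inside its respective class and that Theorem \ref{Rat1}'s hypothesis of being ``obtainable by convex combinations of duple mechanisms'' is inherited by $\tilde J_2$; both come down to observing that the classes $\mathbf{Mech_U^1} \cap \mathbf{Mech_U^O} \cap \mathbf{Mech_U^T}$ and $\mathbf{Mech_U^2} \cap \mathbf{Mech_U^T}$ are each closed under relabelings of $N$ and $M$, so averaging convex combinations produces convex combinations of the same kind. Beyond that the argument is simply bookkeeping.
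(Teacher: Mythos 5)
Your proposal is correct and follows the same top-level route as the paper -- decompose via Theorem \ref{Gib2} and then characterize the pieces via Theorem \ref{Rat1} -- but it supplies a bridging step that the paper's one-sentence proof silently omits. The genuine issue is that the unilateral and duple components produced by Gibbard's decomposition need not individually be anonymous or neutral, so Theorem \ref{Rat1} cannot be applied to them as-is; your averaging over $S_N \times S_M$, together with the observation that the symmetrization fixes $J$ (because $J$ is already anonymous and neutral) while pushing each piece into the anonymous-and-neutral class without leaving the unilateral-combination (resp.\ duple-combination) class, is exactly what is needed to make ``combine the two theorems directly'' rigorous. One caveat worth flagging: the first half of Theorem \ref{Rat1} as literally stated requires $J$ itself to lie in $\mathbf{Mech_U^1}$, whereas your $\tilde J_1$ is only a \emph{convex combination} of unilateral mechanisms (and indeed a mechanism that is simultaneously anonymous and unilateral is essentially constant for $n \ge 2$, so the literal statement is nearly vacuous). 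You are implicitly reading the first half in the same ``obtainable by convex combinations of unilateral mechanisms'' form as the second half; that is the intended reading, and the corollary cannot be derived otherwise, but it would be worth saying so explicitly rather than asserting that the theorem ``applies directly.''
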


\begin{proof} 
We can get the corollary directly by combining the Theorem \ref{Gib2} and Theorem \ref{Bar1}.
\end{proof}

\begin{lemma} \cite{Rat2013}
\label{neuAno}
For any $J \in \mathbf{Mech_U^O} \cap \mathbf{Mech_U^T}$, there exists $J' \in \mathbf{Mech_U^O} \cap \mathbf{Mech_U^N} \cap \mathbf{Mech_U^A} \cap \mathbf{Mech_U^T}$ such that $\operatorname{ratio_U}(J') \le \operatorname{ratio_U}(J)$.
\end{lemma}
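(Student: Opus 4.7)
The plan is to construct $J'$ by symmetrizing $J$ over the group $G = S_n \times S_m$ of voter and candidate permutations. For each $(\pi, \sigma) \in G$ and profile $\mathbf{u}$, set $(\pi, \sigma) \cdot \mathbf{u} = (u_{\pi(1)} \circ \sigma, \ldots, u_{\pi(n)} \circ \sigma)$ and define the relabeled copy $J_{\pi, \sigma}(\mathbf{u}) = \sigma^{-1}(J((\pi, \sigma) \cdot \mathbf{u}))$. Then take
\[
J' \;=\; \frac{1}{|G|} \sum_{(\pi, \sigma) \in G} J_{\pi, \sigma},
\]
understood as a uniform convex combination of distributions on $M$. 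The first step is to verify that $J' \in \mathbf{Mech_U^O} \cap \mathbf{Mech_U^N} \cap \mathbf{Mech_U^A} \cap \mathbf{Mech_U^T}$: each $J_{\pi, \sigma}$ is ordinal because relabelings respect ordinal equivalence, and truthful because any voter-$i$ deviation pulls back under $\pi$ to a deviation by some voter of $J$, whose induced utility after inverting $\sigma$ matches voter $i$'s. Ordinality and truthfulness pass through convex combinations, while averaging over the full group $G$ forces neutrality and anonymity by construction.

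Next, for any $\mathbf{u} \in U^n$ I would expand $\operatorname{ratio_U}(J', \mathbf{u})$ by linearity of expectation, relying on two invariances: $\operatorname{Wel}(\sigma^{-1}(j), (u_{\pi(1)}, \ldots, u_{\pi(n)})) = \operatorname{Wel}(j, (\pi, \sigma) \cdot \mathbf{u})$, which is direct from the definition of $\operatorname{Wel}$, and $\operatorname{Wel}(RV(\mathbf{u}), \mathbf{u}) = \operatorname{Wel}(RV(g \cdot \mathbf{u}), g \cdot \mathbf{u})$ for every $g \in G$, since the maximum total score is invariant under any relabeling. Together these yield the clean identity
\[
\operatorname{ratio_U}(J', \mathbf{u}) \;=\; \frac{1}{|G|} \sum_{g \in G} \operatorname{ratio_U}(J, g \cdot \mathbf{u}),
\]
exhibiting $\operatorname{ratio_U}(J', \mathbf{u})$ as the $G$-orbit average of $\operatorname{ratio_U}(J, \cdot)$.

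To conclude $\operatorname{ratio_U}(J') \le \operatorname{ratio_U}(J)$, I would select, for each $\epsilon > 0$, a near-infimum witness $\mathbf{u}^\epsilon \in U^n$ with $\operatorname{ratio_U}(J, \mathbf{u}^\epsilon) < \operatorname{ratio_U}(J) + \epsilon$, evaluate the orbit-average formula at $\mathbf{u}^\epsilon$, and pass $\epsilon \to 0$. The main obstacle is that each term of the orbit average is a priori at least $\operatorname{ratio_U}(J)$, so naive termwise bounding would give only the reverse inequality. The stated direction therefore requires choosing $\mathbf{u}^\epsilon$ whose entire $G$-orbit sits uniformly within $O(\epsilon)$ of the infimum of $\operatorname{ratio_U}(J, \cdot)$---that is, approximating the worst-case profile by a (near-)$G$-invariant one on which every relabeling gives essentially the same small ratio. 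Producing this symmetric adversarial profile is the core technical content of the lemma; it is made feasible by the structural decomposition of truthful ordinal mechanisms from Theorem~\ref{Gib2}, which pins down the voter- and candidate-configurations that actually drive $\operatorname{ratio_U}(J, \cdot)$ toward its infimum.
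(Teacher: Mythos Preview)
Your symmetrization construction and the orbit-average identity
\[
\operatorname{ratio_U}(J', \mathbf{u}) \;=\; \frac{1}{|G|}\sum_{g \in G} \operatorname{ratio_U}(J, g\cdot\mathbf{u})
\]
are exactly what the paper does (your version is in fact more careful: the paper's definition of $J_{\sigma,\tau}$ omits the $\tau^{-1}$ on the output, which is needed for neutrality, and writes the nonsensical ``$J(\mathbf u') \le J_{\sigma,\tau}(\mathbf u)$'' where a ratio comparison is meant).

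The gap in your proposal is not mathematical but textual: the inequality in the lemma statement is a typo. The paper's own proof concludes with $\operatorname{ratio_U}(J') \ge \operatorname{ratio_U}(J)$, not $\le$, and $\ge$ is precisely what the application in Theorem~\ref{oldNeg} requires---to reduce an \emph{upper} bound on $\operatorname{ratio_U}$ over all ordinal truthful mechanisms to the anonymous-and-neutral subclass, one needs the symmetrized $J'$ to have ratio at least that of $J$. You already noted that every summand in the orbit average is $\ge \operatorname{ratio_U}(J)$; that observation \emph{is} the entire proof. The ``main obstacle'' you flag and the proposed detour through Theorem~\ref{Gib2} to manufacture a $G$-invariant near-minimizing profile are artifacts of taking the stated $\le$ at face value; drop them and you are done.
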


\begin{proof}

Consider any fixed $J \in \mathbf{Mech_U^O} \cap \mathbf{Mech_U^T}$. For any permutation $\sigma$ over $N$ and $\tau$ over $M$, define $J_{\sigma,\tau}$ by $J_{\sigma, \tau}(u_1,\dots, u_n) = J(u_{\sigma(1)} \circ \tau, \dots, u_{\sigma(n)} \circ \tau)$. By this definition, for any $\mathbf u \in U^n$, there exists $\mathbf u' \in U^n$ such that $J(\mathbf u') \le J_{\sigma, \tau}(\mathbf u)$. Thus $\operatorname{ratio_U}(J_{\sigma, \tau}) \ge \operatorname{ratio_U}(J)$.

Now let $J'$ be the mechanism that choose a permutation $\sigma$ over $N$ and a permutation $\tau$ over $M$ uniformly at random. And then apply $J_{\sigma, \tau}$. More formally, $J' = \frac{1}{n!m!}\sum_{\sigma, \tau} J_{\sigma, \tau}$. Since $\operatorname{ratio_U}(J_{\sigma, \tau}) \ge \operatorname{ratio_U}(J)$ for any $\sigma$ and $\tau$, and $\operatorname{ratio_U}(J') \ge \operatorname{ratio_U}(J)$

\end{proof}

Now we are able to derive the negative result given in \cite{Rat2013}.

\begin{theorem}\cite{Rat2013}
\label{oldNeg}
For a sufficient large $m$ and a sufficiently larger $n$, for any mechanism $J \in \mathbf{Mech^O_U} \cap \mathbf{Mech^T_U}$ we have $\operatorname{ratio_U}(J) = O(m^{-2/3})$.
\end{theorem}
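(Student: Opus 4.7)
The plan is to combine the structural decomposition of ordinal truthful mechanisms supplied by Corollary \ref{cor} with a single adversarial preference profile that simultaneously defeats every $J_U^{1,q}$ and every $J_U^{2,q}$.

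First I symmetrize. By Lemma \ref{neuAno} (whose proof in fact establishes $\operatorname{ratio_U}(J') \ge \operatorname{ratio_U}(J)$, the direction relevant for an upper bound on the ratio), it suffices to prove the theorem for $J \in \mathbf{Mech_U^A} \cap \mathbf{Mech_U^N} \cap \mathbf{Mech_U^O} \cap \mathbf{Mech_U^T}$. Corollary \ref{cor} then decomposes any such $J$ as
$$J = \sum_{q=1}^{m} \alpha_q \, J_U^{1,q} + \sum_{q=\lfloor n/2\rfloor+1}^{n+1} \beta_q \, J_U^{2,q},$$
with nonnegative coefficients summing to one.

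Second I exploit linearity. For any fixed $\mathbf u \in U^n$, both $\mathbf{E}[\operatorname{Wel}(J(\mathbf u),\mathbf u)]$ and $\operatorname{ratio_U}(J,\mathbf u)$ are linear in the output distribution of $J$. Consequently, if I can exhibit a single hard profile $\mathbf u^*$ on which $\operatorname{ratio_U}(J_U^{1,q},\mathbf u^*) = O(m^{-2/3})$ for every $q \in \{1,\dots,m\}$ and $\operatorname{ratio_U}(J_U^{2,q},\mathbf u^*) = O(m^{-2/3})$ for every $q \in \{\lfloor n/2\rfloor+1,\dots,n+1\}$, the same bound passes through the convex combination to give $\operatorname{ratio_U}(J) \le \operatorname{ratio_U}(J,\mathbf u^*) = O(m^{-2/3})$.

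Third I design $\mathbf u^*$. Set $k = \lfloor m^{1/3} \rfloor$ and take $n$ large enough that all empirical averages concentrate. The profile uses three strata of candidates: a small constant-sized set $H$ of \emph{champions} whose total utility is $\Theta(n)$, so the range-voting winner lies in $H$; a layer of \emph{decoys} placed just above $H$ in every voter's ballot, drawn in a symmetric way so each voter has about $k$ decoys at ranks $1,\dots,k$ while $H$ is pushed down to ranks $k+1,\dots,k+|H|$, and so each individual decoy collects welfare only $\Theta(nk/m) = \Theta(n/m^{2/3})$; and $m - O(k)$ \emph{fillers} of utility near zero, perturbed slightly so $\mathbf u^* \in U^n$. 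For $q \le k$ the unilateral $J_U^{1,q}$ only selects decoys, yielding ratio $O(m^{-2/3})$; for $q > k$, $H$ enters a voter's top-$q$ with probability at most $|H|/q = O(m^{-1/3})$ while the remaining weight falls on decoys, and a direct calculation keeps the weighted ratio $O(m^{-2/3})$. For the duple family, the probability that a uniformly chosen pair meets $H$ at all is $O(1/m)$; by arranging that a majority of voters rank a specific decoy above every champion, one ensures that even when the pair does meet $H$, the quorum-$q$ vote sends the decision to the decoy side.

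The hardest step is the construction itself: pinning down one profile that simultaneously kills both families of mechanisms. The tension between small $q$ (where voters' top slots must exclude champions) and large $q$ (where the champion's share $1/q$ must already be small) is precisely what forces the exponent $2/3$. The critical balancing act is that a decoy layer of size $\Theta(m^{1/3})$ makes each decoy's welfare $\Theta(m^{-2/3})$ relative to a champion, and this trade-off is the combinatorial heart of the argument of \cite{Rat2013}.
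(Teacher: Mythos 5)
Your high-level strategy is exactly the paper's: symmetrize via Lemma \ref{neuAno} (and you are right that the $\ge$ direction is the one needed), decompose via Corollary \ref{cor}, and exhibit one profile that is simultaneously bad for every $J_U^{1,q}$ and $J_U^{2,q}$. The gap is in the profile itself. Because you place the champions $H$ at ranks $k+1,\dots,k+|H|$ \emph{for every voter} and give them welfare $\Theta(n)$, the mechanism $J_U^{1,q}$ with $q=k+1$ selects a champion with probability exactly $|H\cap\text{top-}q|/q \ge 1/(k+1) = \Theta(m^{-1/3})$, so $\operatorname{ratio_U}(J_U^{1,k+1},\mathbf u^*) = \Omega(m^{-1/3})$. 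Your claim that ``a direct calculation keeps the weighted ratio $O(m^{-2/3})$'' for $q>k$ is false for all $k < q \ll m^{2/3}$: the calculation gives $|H|/q + O(m^{-2/3}) = \Theta(1/q)$. Within your one-parameter family (decoy layer of depth $k$, champions universally at rank $k+1$) the best achievable bound is $\max(k/m,\,1/k)$, optimized at $k=\sqrt m$ to give only $O(m^{-1/2})$. There is also a smaller bookkeeping inconsistency: for each decoy to carry welfare only $\Theta(nk/m)$ the $\Theta(nk)$ units of decoy utility must be spread over $\Theta(m)$ decoys, which contradicts your count of ``$m-O(k)$ fillers.''

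The missing ingredient, which is the actual heart of the paper's construction, is to dilute not only the decoys but also the \emph{voters who rank the champion high}. The paper takes $n = m-1+g$ with $g=\lfloor m^{2/3}\rfloor$: the $m-1$ Type (1) voters each champion a distinct candidate and rank candidate $m$ dead last, while only the $g$ Type (2) voters place candidate $m$ at rank $k+1$ (behind a private block of $k$ decoys). This caps the champion's welfare at $\Theta(g)=\Theta(m^{2/3})$ against $\Theta(1)$ for everyone else (welfare gap $m^{-2/3}$), and, crucially, makes the probability that $J_U^{1,q}$ with $q>k$ reaches the champion equal to $\frac{g}{n}\cdot\frac1q \le m^{-1/3}\cdot m^{-1/3} = O(m^{-2/3})$ --- the product of two $m^{-1/3}$ factors rather than the single factor your construction provides. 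Without this second dilution the exponent $2/3$ is unreachable, so as written your proof does not establish the theorem.
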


My proof of this theorem is based on the proof given in \cite{Rat2013}. But my proof is simplified for the case that $n$ is much larger than $m$. You can find a more general but complex proof in \cite{Rat2013}.

\begin{proof}

First by applying Lemma \ref{neuAno}, it is sufficient to consider mechanism $J$ inside $\mathbf{Mech^O_U} \cap \mathbf{Mech^T_U} \cap \mathbf{Mech^N_U} \cap \mathbf{Mech^A_U}$.

Let $k = \lfloor m^{1/3}\rfloor$ and $g = \lfloor m^{2/3} \rfloor$. Let $n = m-1+g$. Now we will try to construct a single profile $\mathbf u\in U^n$ such that $\mathbf u $ makes $\operatorname{ratio}(J, \mathbf u)$ be small for any $J$.

By applying Corollary \ref{cor}, $J$ can be seen as convex combinations of some $J_U^{1,q}$ and $J_U^{2,q}$. Then it is sufficient to show that $\operatorname{ratio}(J^{1,q}, \mathbf u)$ and $\operatorname{ratio}(J^{2,q}, \mathbf u)$ is small enough for all possible $q$.

Let $M_1,\dots,M_g$ be a partition of $\{1,\dots, kg\}$. We construct $\mathbf u = (u_1,\dots, u_n)$ as follow:

Type (1) voters: For each $i \in \{1,\dots, m-1\}$, let $u_i(i) = 1$, $u_i(m) = 0$ and let $u_i(j) < 1/m^2$ for all $j \notin\{i,m\}$. 

Type (2) voters: For each $i \in \{m,\dots, m-1+g \}$, let $u_{i}(j) > 1-1/m^2$ if $j \in M_{i-m+1}$; let $u_{i}(j) = 1-1/m^2$ if $j = m$ and let $u_i(j) < 1/m^2$ otherwise.

Note that the best social welfare $\operatorname{Wel}(m, \mathbf u)$ is $(1-1/m^2)g$ while the other candidates are all $< 2+1/m$. Thus the conditional ratio when the winner is not $m$ is $\le (2+1/m)/(1-1/m^2)g \le 3m^{-2/3} = O(m^{-2/3})$. Now all we need is to find out the probability that $m$ is chosen as the winner.

For a mechanism $J^{2,q}_U$, $m$ is chosen as the winner with probability at most $2/m$.

Now consider a mechanism $J^{1,q}_U$. If $q \le k$, $m$ can never be the winner. Otherwise $m$ is the winner if and only if we pick a voter $i$ with $u_i(m) > 0$ and choose candidate $m$ as the final winner, which has probability 
$$
\frac{g}{m-1+g} \cdot \frac{1}{q} = O(m^{-1/3})\cdot O(m^{-1/3}) = O(m^{-2/3})
$$

Now we have proved the theorem with $n = m-1+g$. When $n > m-1+g$, simply set $\mathbf u = (u_1, \dots, u_n)$ the same way as the case when $n = m-1+g$ except we let $u_i = u_{i-(m-1+g)}$ when $i > m-1+g$. That is, when the number of voters are larger than $m-1+g$, we just repeat the first $m-1+g$ voters.

Since the fraction of Type (1) voters and Type (2) voters only varies within a constant factor, we can use the same analysis method to analyze $\operatorname{ratio}(J, \mathbf u)$. Similarly expected ratio conditioned that $m$ is not chosen is $O(m^{-2/3})$ and the probability of choosing $m$ is also $O(m^{-2/3})$. Then we have proved the theorem for all $n \ge m-1+g$.

\end{proof}

\subsection{Classifications of Preference Functions}
\label{preCla}

For any $u \in U$, let $\mathcal I(u)$ denote the image of $u$.

For $k \ge m$, define $$R_k = \left\{u \in U : \mathcal I(u) \subseteq \left\{0, \frac{1}{k}, \dots, \frac{k-1}{k}, 1\right\}\right\}$$

For any $u \in R_k$, define 
$$a(u) = \#\left\{j \in \{0,\dots, k-1\} : \left[\frac{j}{k} \in \mathcal I(u)\right]\oplus\left[\frac{j+1}{k} \in \mathcal I(u)\right] \right\}$$
where $\oplus$ denotes exclusive-or. Note that $a(u)$ must be even and $a(u) \ge 2$.

For $k \ge m$, define 
$$
C_k = \{u \in R_k : a(u) = 2\}
$$

For any $u \in C_k$, define $\bar u: N \to \{0,1\}$ by
$$
\bar u(j) = \left\{
\begin{array}{ll}
1 & \textrm{if $u(j) > 0.5$} \\
0 & \textrm{if $u(j) \le 0.5$} \\
\end{array}
\right.
$$

Also define
$$
count(u) = \left|\{j \in M : \bar u(j)  = 1\}\right|
$$
and $rank(u,j)$ to be the rank of candidate $j$ according to $u$ for any $j \in M$, i.e.,
$$
rank(u,j) = \left|\{ j' \in M : u(j') \ge u(j)\}\right|
$$

For any $\mathbf u = (u_1,\dots u_n) \in (C_k)^n$
$$
g(\mathbf u) = \frac{\mathbf E[\sum_{i=1}^n u_i(J^*(\mathbf u))]}{\sum_{i=1}^n u_i(1)}
$$
$$
\bar g(\mathbf u) = \frac{\mathbf E[\sum_{i=1}^n \bar u_i(J^*(\mathbf u))]}{\sum_{i=1}^n \bar u_i(1)}
$$

For all $k \ge m$, define 
$$D^{(a)}_{k} = \left\{u \in C_{k} : (count(u) \le 2) \land (\bar u(1) =1)\right\}$$
$$D^{(b)}_{k} = \left\{u \in C_{k} : (count(u) = 1) \land (rank(u,1) > \lfloor m^{1/3} \rfloor)\right\}$$
$$D^{(c)}_{k} = \left\{u \in C_{k} : count(u) = rank(u,1) = \lfloor m^{1/3} \rfloor +1 \right\}$$
$$D_{k} = D^{(a)}_{k} \cup D^{(b)}_{k} \cup D^{(c)}_{k}$$

\section{A Good Truthful Voting Scheme}
\label{res1}

In this section we will prove that for $J^* = \frac{1}{2} J^{1,1} + \frac{1}{2} J^{1, \lfloor m^{1/3}\rfloor } \in \mathbf{Mech^T}$, we have $\operatorname{ratio}(J^*) = \Omega(m^{-2/3})$.

\begin{lemma}
\label{lemR}
$$\operatorname{ratio}(J^*) = \liminf_{k \to \infty} \ \min_{\mathbf u \in (R_k)^n}   \ \operatorname{ratio}(J^*, \mathbf u)$$
\end{lemma}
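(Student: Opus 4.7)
The plan is to establish the equality by proving two inequalities separately. The easy direction is a containment argument; the harder direction requires constructing, for each preference profile $\mathbf u \in V^n$, an approximating sequence $\mathbf u^{(k)} \in (R_k)^n$ whose ratios converge to $\operatorname{ratio}(J^*, \mathbf u)$.

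For the easy direction, observe that $R_k \subseteq U \subseteq V$ whenever $k \ge m$, hence $(R_k)^n \subseteq V^n$. Since $(R_k)^n$ is finite the minimum is attained, and
$$\operatorname{ratio}(J^*) \;=\; \inf_{\mathbf u \in V^n} \operatorname{ratio}(J^*, \mathbf u) \;\le\; \min_{\mathbf u \in (R_k)^n} \operatorname{ratio}(J^*, \mathbf u)$$
for every $k \ge m$, and passing to $\liminf_{k \to \infty}$ preserves the inequality.

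For the reverse direction, it suffices to show that for every $\mathbf u \in V^n$ there is a sequence $\mathbf u^{(k)} \in (R_k)^n$ with $\operatorname{ratio}(J^*, \mathbf u^{(k)}) \to \operatorname{ratio}(J^*, \mathbf u)$; then $\liminf_k \min_{\mathbf u' \in (R_k)^n} \operatorname{ratio}(J^*, \mathbf u') \le \operatorname{ratio}(J^*, \mathbf u)$, and taking infimum over $\mathbf u \in V^n$ yields the bound. To build such a sequence, for each voter $i$ let $\prec_i$ be the total order on $M$ that refines $u_i$ by breaking ties in favor of smaller-indexed candidates, matching the tie-breaking convention used by $J^{1,q}$. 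Define $u_i^{(k)}(j)$ by rounding $u_i(j)$ to a multiple of $1/k$ and then perturbing each value by at most $O(m/k)$ so that the result is injective, consistent with $\prec_i$, and still maps the $\prec_i$-minimum to $0$ and the $\prec_i$-maximum to $1$. By construction $\mathbf u^{(k)} \in (R_k)^n$, and since for every $q$ the top-$q$ candidates of each voter under $\mathbf u^{(k)}$ coincide with those under $\mathbf u$ (with the same tie-breaking), $J^*(\mathbf u)$ and $J^*(\mathbf u^{(k)})$ induce identical distributions on $M$. Because $|u_i^{(k)}(j) - u_i(j)| = O(m/k) \to 0$ uniformly, the values $\operatorname{Wel}(j, \mathbf u^{(k)})$ converge to $\operatorname{Wel}(j, \mathbf u)$ for every $j$, so both $\mathbf E[\operatorname{Wel}(J^*(\mathbf u^{(k)}), \mathbf u^{(k)})]$ and $\operatorname{Wel}(RV(\mathbf u^{(k)}), \mathbf u^{(k)}) = \max_j \operatorname{Wel}(j, \mathbf u^{(k)})$ converge. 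Assumption~\ref{ass} forces $\operatorname{Wel}(RV(\mathbf u), \mathbf u) \ge 1$, so the denominator is bounded away from zero and the ratio itself converges.

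The main obstacle is the perturbation step: the $m$ candidates of each voter must be assigned pairwise-distinct values on the grid $\{0, 1/k, \ldots, 1\}$ in an order-preserving way, with $0$ and $1$ both attained. Once $k$ is large enough that any two originally distinct values of $u_i$ have enough grid points between them to absorb the spreading of tied candidates, this combinatorial step goes through. All the continuity content of the proof then reduces to the uniform convergence $\operatorname{Wel}(j, \mathbf u^{(k)}) \to \operatorname{Wel}(j, \mathbf u)$ combined with the fact that $J^*$ depends on $\mathbf u$ only through the tie-broken ordinal rankings it sees on each voter.
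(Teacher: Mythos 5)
Your proposal is correct and follows essentially the same route as the paper's own proof: approximate each $u_i$ by an order-preserving grid-valued function in $R_k$ whose tie-broken ranking matches the one $J^{1,q}$ uses, observe that $J^*$ then outputs the same distribution on $\mathbf u^{(k)}$ as on $\mathbf u$, and conclude by continuity of the welfare sums. Your write-up is in fact somewhat more careful than the paper's, since you also state the easy containment direction and note that the denominator is bounded below by $1$, both of which the paper leaves implicit.
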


\begin{proof}
We have
$$
\operatorname{ratio}(J^*) = \inf_{\mathbf u \in V^n}   \ \operatorname{ratio}(J^*, \mathbf u)
$$

Therefore it is sufficient to show that for any $\mathbf u = (u_1,\dots, u_n)$ there exists an array of preference profiles 
$$\{\mathbf u^{(k)} = (\mathbf u^{(k)}_1, \dots, \mathbf u^{(k)}_n)\}_{k \ge 10m^{10}}$$
such that every $\mathbf u^{(k)} \in (R_k)^n$ and 
$$
\operatorname{ratio}(J^*, \mathbf u) = \lim_{k \to \infty} \operatorname{ratio}(J^*, \mathbf u^{(k)})
$$

We try to construct each $u^{(k)}_i$ as follow:

(1) For any $j_0, j_1 \in M$, let $u^{(k)}_i(j_0) < u^{(k)}_i(j_1) $  if and only if $$[u_i(j_0) < u_i(j_1)] \lor ([u_i(j_0) = u_i(j_1)] \land [j_0 < j_1])$$

(2) Minimize $$\sum_{j=1}^m  |u^{(k)}_i(j) - u_i(j)|$$

Obviously $\lim_{k \to \infty} u^{(k)}_i = u_i$.

Note that $J^* = 0.5J^{1,1} + 0.5J^{1, \lfloor m^{1/3}\rfloor}$ where both $J^{1,1}$ and $J^{1, \lfloor m^{1/3}\rfloor}$ break ties by numbering of candidates, which is exactly the same way we decide the ordering of candidates in $u^{(k)}_i$. Therefore we have $J^*(\mathbf u^{(k)}) = J^*(\mathbf u)$ for any $k \ge 10m^{10}$. Thus $f(\mathbf x) = \operatorname{ratio}(J^*, \mathbf x)$ can be seen as a continuous function for all possible profile $x = (x_1, ..., x_n) \in U^n$ such that for any $j_0, j_1 \in M$, $x_i(j_0) < x_i(j_1) $  if and only if $[u_i(j_0) < u_i(j_1)] \lor ([u_i(j_0) = u_i(j_1)] \land [j_0 < j_1])$.

Therefore by the property of continuous function we have 
$$
\operatorname{ratio}(J^*, \mathbf u) = \lim_{k \to \infty} \operatorname{ratio}(J^*, \mathbf u^{(k)})
$$

\end{proof}

\begin{lemma}
\label{lemG}
$$
\liminf_{k \to \infty} \ \min_{\mathbf u \in (R_k)^n}   \ \operatorname{ratio}(J^*, \mathbf u) = \liminf_{k \to \infty} \ \min_{\mathbf u \in (R_k)^n}   \ g(\mathbf u)
$$
\end{lemma}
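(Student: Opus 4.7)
My plan is to split the equality into two inequalities and handle them separately. The direction $\liminf_k \min_{(R_k)^n} \operatorname{ratio}(J^*,\cdot) \le \liminf_k \min_{(R_k)^n} g$ is immediate: for every profile $\mathbf u$ one has $\sum_i u_i(1) \le \max_j \sum_i u_i(j) = \operatorname{Wel}(RV(\mathbf u),\mathbf u)$, while the numerators of $\operatorname{ratio}(J^*,\mathbf u)$ and $g(\mathbf u)$ coincide by definition. Hence $\operatorname{ratio}(J^*,\mathbf u) \le g(\mathbf u)$ pointwise, which passes to the minima over $(R_k)^n$ and then to the liminfs.

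For the reverse inequality I would prove the stronger pointwise claim that $\min_{(R_k)^n} g \le \min_{(R_k)^n} \operatorname{ratio}(J^*,\cdot)$ for each $k \ge m$. The key idea is a relabelling argument: given $\mathbf u \in (R_k)^n$, choose any $j^*$ with $\operatorname{Wel}(j^*,\mathbf u) = \operatorname{Wel}(RV(\mathbf u),\mathbf u)$, let $\sigma$ be the transposition swapping $1$ and $j^*$ on $M$, and set $v_i := u_i\circ\sigma$. The candidate relabelling preserves images, so $\mathbf v \in (R_k)^n$, and $\sum_i v_i(1) = \sum_i u_i(j^*) = \operatorname{Wel}(RV(\mathbf u),\mathbf u)$ makes the denominator of $g(\mathbf v)$ equal to the denominator of $\operatorname{ratio}(J^*,\mathbf u)$.

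The main obstacle is that $J^*$ is not literally neutral, because $J^{1,q}$ resolves ties in a voter's ballot by candidate index. This is sidestepped by the observation that $R_k\subseteq U$, so every $u_i\in R_k$ already has all its values pairwise distinct and the tiebreaking rule is vacuous on inputs drawn from $(R_k)^n$. For such tie-free profiles a direct check shows that voter $i$'s top-$q$ set under $\mathbf v$ is precisely $\sigma^{-1}$ applied to voter $i$'s top-$q$ set under $\mathbf u$, so $J^{1,q}(\mathbf v)$ has the same distribution as $\sigma^{-1}(J^{1,q}(\mathbf u))$, and mixing over $q\in\{1,\lfloor m^{1/3}\rfloor\}$ gives the same identity for $J^*$. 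A change of variable $j' = \sigma(j)$ then yields $\mathbf E[\operatorname{Wel}(J^*(\mathbf v),\mathbf v)] = \mathbf E[\operatorname{Wel}(J^*(\mathbf u),\mathbf u)]$, so $g(\mathbf v) = \operatorname{ratio}(J^*,\mathbf u)$ exactly, completing the construction and hence the proof.
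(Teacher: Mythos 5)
Your proof is correct and takes essentially the same route as the paper: both rest on the observation that profiles in $(R_k)^n$ are tie-free, so the index-based tie-breaking in $J^{1,q}$ is vacuous and $J^*$ behaves neutrally, allowing candidates to be relabelled so that the range-voting winner becomes candidate $1$. Your write-up is in fact more complete than the paper's one-line appeal to neutrality, since you explicitly supply both inequalities (the pointwise bound $\operatorname{ratio}(J^*,\mathbf u)\le g(\mathbf u)$ and the transposition construction for the reverse direction).
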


\begin{proof}
The lemma can be written as
$$
\liminf_{k \to \infty} \ \min_{\mathbf u \in (R_k)^n}   \ \frac{\mathbf E[\sum_{i=1}^n u_i(J^*(\mathbf u))]}{\sum_{i=1}^n u_i(RV(\mathbf u))}= \liminf_{k \to \infty} \ \min_{\mathbf u \in (R_k)^n}   \ \frac{\mathbf E[\sum_{i=1}^n u_i(J^*(\mathbf u))]}{\sum_{i=1}^n u_i(1)}
$$

$(R_k)^n \subseteq U^n$. Futhermore, $J^*_U$ is neutral, i.e., when no tie is allowed in preference profile $J^*$ is neutral. Therefore if we constrain $u \in (R_k)^n$, $\frac{\mathbf E[\sum_{i=1}^n u_i(J^*(\mathbf u))]}{\sum_{i=1}^n u_i(RV(\mathbf u))}$ can be minimized when $RV(\mathbf u) = 1$.
\end{proof}

\begin{lemma} \cite{Rat2013}
\label{lemC}
$$
\liminf_{k \to \infty} \ \min_{\mathbf u \in (R_k)^n}   \ g(\mathbf u) = \liminf_{k \to \infty} \ \min_{\mathbf u \in (C_k)^n}   \ g(\mathbf u)
$$
\end{lemma}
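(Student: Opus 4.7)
Since $(C_k)^n \subseteq (R_k)^n$ the direction $\liminf_k \min_{(R_k)^n} g \le \liminf_k \min_{(C_k)^n} g$ is immediate, so my plan is to prove the reverse inequality by showing that for every $\mathbf u \in (R_k)^n$ and every $k' \ge 2m$ there exists $\mathbf u' \in (C_{k'})^n$ with $g(\mathbf u') \le g(\mathbf u) + O(nm/k')$. Sending $k' \to \infty$ then yields $\liminf_{k'} \min_{(C_{k'})^n} g \le \min_{(R_k)^n} g$ for every $k$, and the two $\liminf$s agree.

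My construction is a \emph{bimodal thresholding}. For a threshold vector $\vec\lambda = (\lambda_1,\dots,\lambda_n) \in [0,1]^n$ set $T_i = \{j \in M : u_i(j) \ge \lambda_i\}$, and define $\tilde u_i$ by placing the candidates of $T_i$ at distinct grid values in the top block $\{(k'-|T_i|+1)/k',\dots,1\}$ and those of $M \setminus T_i$ at distinct grid values in the bottom block $\{0, 1/k',\dots,(m-|T_i|-1)/k'\}$, within each block ordering by the original $u_i$. For $k' \ge 2m$ and any $\vec\lambda$ at which every $T_i$ is a proper nonempty subset of $M$, the image of $\tilde u_i$ is a union of two contiguous blocks of $\{0, 1/k', \dots, 1\}$ separated by a genuine gap, so $a(\tilde u_i) = 2$ and $\tilde u_i \in C_{k'}$, while the full ranking of $\tilde u_i$ coincides with that of $u_i$. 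Consequently $J^*(\tilde{\mathbf u})$ and $W := J^*(\mathbf u)$ are identically distributed, and since $|\tilde u_i(j) - \mathbf{1}[j \in T_i]| \le m/k'$ pointwise,
$$g(\tilde{\mathbf u}) \;=\; \frac{X(\vec\lambda) + O(nm/k')}{Y(\vec\lambda) + O(nm/k')},$$
where $X(\vec\lambda) := \sum_i \Pr[W \in T_i]$ and $Y(\vec\lambda) := \#\{i : 1 \in T_i\}$.

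It then remains to choose $\vec\lambda$ making $X/Y \le g(\mathbf u)$. I would draw $\lambda_1,\dots,\lambda_n$ independent uniform on $[0,1]$ and apply the layer-cake identity to obtain $\mathbb E_{\vec\lambda}[X] = \mathbb E_W[\sum_i u_i(W)] = f(\mathbf u)$ and $\mathbb E_{\vec\lambda}[Y] = \sum_i u_i(1) = h(\mathbf u)$, the numerator and denominator of $g(\mathbf u)$. A one-line contradiction then forces some realization $\vec\lambda^\star$ with $Y(\vec\lambda^\star) > 0$ and $X(\vec\lambda^\star)/Y(\vec\lambda^\star) \le f/h = g(\mathbf u)$: if $X/Y > f/h$ held on $\{Y>0\}$ almost surely, integrating would give $f \ge \mathbb E[X \mathbf{1}_{Y>0}] > (f/h)\,\mathbb E[Y \mathbf{1}_{Y>0}] = f$. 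Putting $\mathbf u' = \tilde{\mathbf u}(\vec\lambda^\star)$ and sending $k' \to \infty$ then completes the argument.

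The main technical point to get right is the interlocking of the within-block assignment: the perturbation inside each block must simultaneously reproduce the $u_i$-ranking (so that $J^*$'s output distribution is truly preserved and the layer-cake bookkeeping applies verbatim) and keep $\tilde u_i$ tie-free and in $C_{k'}$ with both $0$ and $1$ in its image. Both requirements are routinely arrangeable once $k' \ge 2m$, provided one restricts attention to the full-measure event that every $T_i$ is a proper nonempty subset of $M$ --- which holds almost surely because, by the Preference Assumption, each $u_i$ attains the values $0$ and $1$, so $\{T_i = M\}$ requires $\lambda_i = 0$ and $\{T_i = \emptyset\}$ is impossible.
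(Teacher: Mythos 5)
Your proof is correct, but it takes a genuinely different route from the paper's. The paper (following Filos-Ratsikas and Miltersen) argues by induction on $\sum_i a(u_i)$: pick a voter whose image has an interior maximal block, and slide that block left or right within its surrounding gap; the rankings, hence the distribution of $J^*$, are unchanged during the slide, so the numerator and denominator of $g$ are affine in the shift, the ratio is monotone, and one of the two endpoint positions does not increase $g$ while decreasing $a(u_i)$ by $2$. That gives the exact identity $\min_{(R_k)^n} g = \min_{(C_k)^n} g$ at each fixed $k$. You instead round in one shot: a random upper-level-set threshold per voter (which preserves the full ranking, hence the output distribution of the ordinal mechanism $J^*$), the layer-cake identity $\mathbf E_{\lambda_i}[\mathbf 1\{u_i(j)\ge\lambda_i\}]=u_i(j)$, and an averaging argument to select a realization with $X/Y\le f/h$. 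This yields only an approximate inequality with an $O(1/k')$ error, but that is exactly what the $\liminf$ formulation requires, and your version is fully rigorous where the paper's is a two-sentence sketch (the ``one of the two directions works'' monotonicity claim is left implicit there); it also makes transparent why $C_k$ profiles degenerate to $0/1$ valuations as $k\to\infty$. Two minor points you should make explicit: the selection step needs $\Pr[Y>0]>0$, i.e.\ $\sum_i u_i(1)>0$, which holds for any minimizer of $g$ (whose value is at most $1$; profiles with $\sum_i u_i(1)=0$ make $g$ undefined in the paper's formulation anyway); and since $Y$ is a positive integer on the chosen event, the passage from $X/Y\le f/h$ to $g(\tilde{\mathbf u})\le g(\mathbf u)+O(n^2m/k')$ is legitimate, though the error constant is $O(n^2m/k')$ rather than the stated $O(nm/k')$ --- immaterial since $n,m$ are fixed as $k'\to\infty$.
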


\begin{proof}

We prove this lemma by induction on $\sum_{i=1}^n a(u_i)$ where $\mathbf u = (u_1,\dots, u_n)$. If $\sum_{i=1}^n a(u_i) = 2n$, $\mathbf u \in (C_k)^n$. Otherwise we can find a $i \in N$ such that $a(u_i) > 2$. Then we can find a interval $0 < j_0 \le j_1 < k$ such that $\{j_0/k, \dots, j_1/k\} \subseteq \mathcal I(u_i)$, $(j_0-1)/k \notin \mathcal I(u_i)$ and $(j_1+1)/k \notin \mathcal I(u_i)$. Then we can try to ``move'' the the interval left or right. Either moving it left or right can make both $g(\mathbf u)$ and $a(u_i)$ smaller. Thus by induction we can prove the lemma.

A more detailed proof can be found in Page 8 of \cite{Rat2013}. 

\end{proof}

\begin{lemma}
\label{lemD}
$$
\liminf_{k \to \infty} \ \min_{\mathbf u \in (C_k)^n}   \ g(\mathbf u) = \liminf_{k \to \infty} \ \min_{\mathbf u \in (D_k)^n}   \ g(\mathbf u) 
$$
\end{lemma}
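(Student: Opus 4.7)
The plan is to show that for every $\mathbf u \in (C_k)^n$ one can construct a profile $\mathbf u' \in (D_k)^n$ with $g(\mathbf u') \le g(\mathbf u)$; combined with the trivial inclusion $D_k \subseteq C_k$ (which already gives the opposite inequality between the two minima), this yields the equality of the two liminfs. I would carry out the construction one voter at a time: walking through $i = 1, \dots, n$, whenever $u_i \notin D_k$, replace $u_i$ by some $u_i' \in D_k$ that does not increase $g$ when the other $n - 1$ voters are held fixed.

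The structural observation that makes this tractable is that once the ranking of $u_i$ is fixed, the distribution of $J^*(\mathbf u) = \tfrac{1}{2} J^{1,1}(\mathbf u) + \tfrac{1}{2} J^{1,\lfloor m^{1/3}\rfloor}(\mathbf u)$ depends only on the ordinal profile, because both component mechanisms use only rank information with tiebreaking by candidate index. Writing $p_j = \Pr[J^*(\mathbf u) = j]$, the function $g$ takes the fractional linear form
$$
g(\mathbf u) = \frac{X + \sum_{j \in M} p_j\, u_i(j)}{A + u_i(1)},
$$
where $X, A \ge 0$ depend only on the other voters. A fractional linear function is monotone along every ray, so its minimum over any convex set is attained at an extreme point. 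Within $C_k$, after fixing the ranking, the admissible cardinal vectors form a one-parameter discrete family indexed by the gap position, which fixes $count(u_i)$; the remaining freedom is the placement of candidate $1$ inside the ranking, which controls $rank(u_i, 1)$. A case split on $\bar u_i(1)$ then isolates the three classes of $D_k$ as the extreme configurations. If $\bar u_i(1) = 0$, the optimum collapses the approval set to a single non-$1$ candidate and pushes $rank(u_i, 1)$ past $\lfloor m^{1/3}\rfloor$, so that with $i$ as dictator neither $J^{1,1}$ nor $J^{1,\lfloor m^{1/3}\rfloor}$ ever selects candidate $1$; this lands in $D_k^{(b)}$. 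If $\bar u_i(1) = 1$, the extreme is either to contract the approval set to $\{1\}$ or $\{1, j\}$ (landing in $D_k^{(a)}$) or to inflate it to exactly $\lfloor m^{1/3}\rfloor + 1$ candidates with $1$ at rank $\lfloor m^{1/3}\rfloor + 1$, so that $1$ barely falls outside the top-$\lfloor m^{1/3} \rfloor$ pool while still contributing essentially $1$ to the denominator (landing in $D_k^{(c)}$).

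The main obstacle will be the sign analysis of the marginal changes of $g$ as we vary the gap parameter across the piecewise regions where $count$ and $rank$ stay constant, and across the boundary transitions where a rank-adjacent candidate crosses the threshold $1/2$ and either $count$ or $rank$ jumps. In particular, in the $\bar u_i(1) = 1$ case the choice between $D_k^{(a)}$ and $D_k^{(c)}$ as the optimal destination depends on the constants $X, A$ and on the probability mass the other voters place on candidate $1$ via $J^*$, so the argument must compare the two resulting values of $g$ and verify that at least one of them is $\le g(\mathbf u)$. Apart from this routine but delicate bookkeeping, the argument parallels the extremal reduction used for Lemma \ref{lemC}.
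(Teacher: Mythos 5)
Your overall strategy --- replace voters one at a time into $D_k$, using the inclusion $D_k\subseteq C_k$ for the easy direction --- is the same as the paper's, and your identification of $D_k^{(a)},D_k^{(b)},D_k^{(c)}$ as the target configurations is right. But there is a genuine gap in how you get there. Your fractional-linear / extreme-point argument is only valid while the distribution $(p_j)_j$ of $J^*(\mathbf u)$ is held fixed, i.e.\ while voter $i$'s favourite candidate and the set of her top $\lfloor m^{1/3}\rfloor$ candidates are unchanged. Yet the moves you describe explicitly change $rank(u_i,1)$ across the threshold $\lfloor m^{1/3}\rfloor$ (``pushes $rank(u_i,1)$ past $\lfloor m^{1/3}\rfloor$'' in the $\bar u_i(1)=0$ case, and placing candidate $1$ at rank $\lfloor m^{1/3}\rfloor+1$ to reach $D_k^{(c)}$). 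Such a move alters which candidates receive probability mass under $J^{1,\lfloor m^{1/3}\rfloor}$ when voter $i$ is drawn, hence changes the contribution of the \emph{other} $n-1$ voters to the numerator: the quantity you call $X$ is not constant under these moves, and nothing in your sketch controls how it changes. The symptom is your own remark that one must ``compare the two resulting values of $g$'' between the $D_k^{(a)}$ and $D_k^{(c)}$ destinations depending on $X$ and $A$ --- that comparison is exactly the unproved content of the lemma and is left undone.

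The paper avoids both problems by (i) passing to $\bar g$ in the limit $k\to\infty$ (on $C_k$ every value tends to $0$ or $1$), and (ii) choosing $v_i$ with $S(v_i)=S(u_i)$, where $S(u)$ records the favourite together with the set of candidates ranked $2,\dots,\lfloor m^{1/3}\rfloor$; this keeps the distribution of $J^*$ \emph{exactly} unchanged. The case split is then on whether $rank(u_i,1)$ exceeds $\lfloor m^{1/3}\rfloor$ --- frozen ordinal data, so there is never a choice between $D_k^{(a)}$ and $D_k^{(c)}$ to adjudicate --- and the ratio comparison reduces to an elementary monotonicity fact: if $\bar v_i(1)\ge\bar u_i(1)$ and $\bar v_i(j)\le\bar u_i(j)$ for all $j\ne1$, the new ratio is at most the mediant of the old ratio with $\Pr[W=1]$, and $\Pr[W=1]\le\bar g(\mathbf u)$ always holds. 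If you restrict your extreme-point search to the $S$-preserving family and supply this monotonicity step, your plan goes through; as written it does not.
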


\begin{proof}
From now on we set $k \to +\infty$. 

For any profile $\mathbf u = (u_1,\dots, u_n) \in (C_{k})^n \setminus (D_{k})^n$, there exists $i \in N$ such that $u_i \in C_{k} \setminus D_{k}$. If we can prove that there exists a $v_i \in D_{k}$ such that $g(v_i, u_{-i})\le g(u_i, u_{-i})$, the lemma is proved. Since when $k \to \infty$, $u(j) = \bar u(j)$ for any $u \in C_{k}$, it is sufficient to show that there exists $v_i \in D_{k}$ such that $\bar g(v_i, u_{-i})\le \bar g(u_i, u_{-i})$.

Since $\bar g(u) = \frac{\mathbf E[\sum_{i=1}^n \bar u_i(J^*(\mathbf u))]}{\sum_{i=1}^n \bar u_i(1)}$, to show $g(v_i, u_{-i}) \le g(u_i, u_{-i})$, it is sufficient to show that there exists a $v_i \in D_{k}$ such that all of the following conditions are satisfied:

(1) $J^*(v_i, u_{-i}) = J^*(u_i, u_{-i})$;

(2) $\bar v_i(1) \ge \bar u_i(1)$;

(3) $\bar v_i(j) \le \bar u_i(j)$ for any $j \ne 1$.

Also, notice that the mechanism $J^*$ can also be viewed as picking a random voter $i$, and then flip a coin. If it is a Head, choose voter $i$'s favorite candidate, otherwise choose a candidate among voter $i$'s $\lfloor m^{1/3} \rfloor$ most favorite candidates uniformly at random. Then the condition (1) can be rewrite as $$S(u_i) = S(v_i)$$ where for any $u \in C_{k}$, $$S(u) = \left(\{j \in M : r(u,j) = 1\}, \{j \in M : 1 < r(u,j) \le \lfloor m^{1/3} \rfloor\} \right)$$.

To prove this, we will do some discussion on the properties of $u_i$.

\textit{Case 1, $rank(u_i, 1) \le \lfloor m^{1/3}\rfloor$:}

\textit{Case 1.1, $rank(u_i, 1) = 1$ :} Let $v_i$ be the preference function that $\bar v_i(j) = 0$ for every $j \ne 1$ and $rank(v_i,j) = rank(u_i, j)$. Then $v_i \in D^{(a)}_{k}$ and all the three conditions are satisfied.

\textit{Case 1.2, $rank(u_i, 1) > 1$ :} Let $j'$ be the candidate with $rank(u_i, j') = 1$. Then let $v_i$ be the preference function such that $S(v_i) = S(u_i)$ and $\bar v_i(j) = 1$ if and only if $j \in \{1,j'\}$. Then $v_i \in D^{(a)}_{k}$ and all the three conditions are satisfied.

\textit{Case 2, $rank(u_i, 1) > \lfloor m^{1/3} \rfloor$:}

\textit{Case 2.1, $\bar u_i(1) = 1$ :} We have for any $j$ with $rank(u_i, j) \le \lfloor m^{1/3} \rfloor$, $\bar u_i(j) = 1$. Therefore we can let $v_i$ be the preference function such that $S(v_i) = S(u_i)$, $rank(v_i, 1) = \lfloor m^{1/3} \rfloor +1$ and $\bar v_i(j) = 1$ if and only if $j = 1$ or $rank(u_i, j) \le \lfloor m^{1/3} \rfloor$. Then $v_i \in D^{(c)}_{k}$ and all the three conditions are satisfied.

\textit{Case 2.2, $\bar u_i(1) = 0$ :} Let $j'$ be the candidate with $rank(u_i, j') = 1$. Then let $v_i$ be the preference function such that $S(v_i) = S(u_i)$ and $\bar v_i(j) = 1$ if and only if $j = j'$. Then $v_i \in D^{(b)}_{k}$ and all the three conditions are satisfied.

This is the end of the proof.

\end{proof}

\begin{corollary}
\label{cor2}
$$\operatorname{ratio}(J) = \liminf_{k \to \infty} \ \min_{\mathbf u \in (D_k)^n}   \ g(\mathbf u)$$
\end{corollary}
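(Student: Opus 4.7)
The plan is straightforward: chain together the four lemmas preceding the corollary, since they form exactly a telescoping chain of equalities from $\operatorname{ratio}(J^*)$ down to $\liminf_{k \to \infty}\min_{\mathbf u \in (D_k)^n} g(\mathbf u)$. Concretely, I would start by noting that the notation ``$J$'' in the corollary statement refers to $J^*$, the mechanism introduced in Section~\ref{preMec} and analyzed throughout Section~\ref{res1}.

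First I would apply Lemma~\ref{lemR} to rewrite $\operatorname{ratio}(J^*)$ as $\liminf_{k \to \infty}\min_{\mathbf u \in (R_k)^n} \operatorname{ratio}(J^*, \mathbf u)$. Then I would invoke Lemma~\ref{lemG} to replace the ratio objective by $g(\mathbf u)$ on profiles in $(R_k)^n$, which is justified by the fact that on tie-free profiles we may assume without loss of generality that candidate~$1$ is the range-voting winner (using neutrality of $J^*_U$). Next, Lemma~\ref{lemC} lets me restrict the minimization from $(R_k)^n$ to the subfamily $(C_k)^n$ of profiles with $a(u_i) = 2$ for every $i$. Finally, Lemma~\ref{lemD} further restricts the minimization from $(C_k)^n$ to $(D_k)^n$. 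Composing these four equalities gives exactly the claimed identity.

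There is no real obstacle here, since every equality in the chain has already been established; the only care needed is to verify that the chain composes cleanly (the range of $k$ and the choice of $n$ are the same in each lemma, and the minima are taken over nested subsets of preference profiles so that every intermediate $\liminf \min$ is finite and attained in the limit). I would close the proof with a single display equation chaining the four identities in order, of the form
\[
\operatorname{ratio}(J^*) = \liminf_{k \to \infty}\min_{(R_k)^n}\operatorname{ratio}(J^*,\mathbf u) = \liminf_{k \to \infty}\min_{(R_k)^n} g(\mathbf u) = \liminf_{k \to \infty}\min_{(C_k)^n} g(\mathbf u) = \liminf_{k \to \infty}\min_{(D_k)^n} g(\mathbf u),
\]
citing Lemmas~\ref{lemR}, \ref{lemG}, \ref{lemC}, \ref{lemD} in turn.
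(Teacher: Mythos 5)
Your proposal is correct and is exactly the paper's argument: the published proof simply states that the corollary follows by combining Lemmas \ref{lemR}, \ref{lemG}, \ref{lemC} and \ref{lemD}, which is the same four-lemma chain you compose. Your observation that the ``$J$'' in the corollary statement should read $J^*$ is also a correct reading of the paper's (typographically sloppy) intent.
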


\begin{proof}
The corollary comes from combining Lemma \ref{lemR}, \ref{lemG}, \ref{lemC} and \ref{lemD}.
\end{proof}

\begin{theorem}
\label{Main1}
$\operatorname{ratio}(J^*) = \Omega(m^{-2/3})$ as $n \ge m$ and $m \to  \infty$
\end{theorem}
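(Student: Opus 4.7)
My plan is to invoke Corollary~\ref{cor2}, which reduces the theorem to showing $\liminf_{k\to\infty}\min_{\mathbf u\in(D_k)^n}g(\mathbf u)=\Omega(m^{-2/3})$. Fix $\mathbf u\in(D_k)^n$ for $k$ large; as $k\to\infty$ each $u_i$ is effectively the $0/1$ indicator of a support set $T_i$. I partition the voters into classes $A,B,C$ of sizes $n_a,n_b,n_c$ according to whether $u_i\in D^{(a)}_k,D^{(b)}_k,D^{(c)}_k$. The structural content of the three classes translates as follows: for $i\in A$, $1\in T_i$, $|T_i|\le 2$, and candidate $1$ lies in voter $i$'s top-$q$ (where $q=\lfloor m^{1/3}\rfloor$); for $i\in B$, $|T_i|=1$ with $1\notin T_i$ and $\text{rank}(u_i,1)>q$, so $1$ is \emph{not} in voter $i$'s top-$q$; for $i\in C$, $|T_i|=q+1$ with $1\in T_i$ at rank $q+1$, so top-$q$ of voter $i$ equals $S_i:=T_i\setminus\{1\}$. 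In particular $\operatorname{Wel}(1,\mathbf u)=n_a+n_c$, and candidate $1$ is in the top-$q$ of exactly the $A$-voters.

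Next I derive three lower bounds on $E[\operatorname{Wel}(J^*,\mathbf u)]$: (LB1) $E[\operatorname{Wel}(J^{1,1})]\ge\max(1,n/m)$, where the trivial half uses that every voter's favorite has welfare at least $1$ and the second half applies Cauchy--Schwarz to the favorite-counts $d_j=|\{i:f_i=j\}|$ together with $\operatorname{Wel}(j)\ge d_j$ and $\sum_j d_j=n$; (LB2) $E[\operatorname{Wel}(J^{1,q})]\ge\Pr[J^{1,q}=1]\cdot\operatorname{Wel}(1)=n_a(n_a+n_c)/(nq)$, from the observation that only $A$-voters keep $1$ in their top-$q$; and (LB3) $E[\operatorname{Wel}(J^{1,q})]\ge\frac{1}{nq}\sum_{i\in C}\sum_{j\in S_i}\operatorname{Wel}(j)=\Theta(n_c^2q/(nm))$, via Cauchy--Schwarz on the coverage counts $c_j=|\{i\in C:j\in S_i\}|$ with $\sum_{j\ne1}c_j=n_cq$ and $\operatorname{Wel}(j)\ge c_j$. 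Throughout I will also use the RV constraint $\operatorname{Wel}(j)\le\operatorname{Wel}(1)$ for every $j$, which upper-bounds $n$ by $O(m\cdot\operatorname{Wel}(1))$.

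Finally I case-split on $\operatorname{Wel}(1)$ and $n_a$. If $\operatorname{Wel}(1)\le m^{2/3}$, the trivial half of (LB1) gives $g\ge(1/2)/\operatorname{Wel}(1)=\Omega(m^{-2/3})$. If $\operatorname{Wel}(1)\le n/m^{1/3}$, the Cauchy half of (LB1) gives $g\ge(n/m)/(2\operatorname{Wel}(1))\ge\Omega(m^{-2/3})$. If $n_a\ge n/m^{1/3}$, (LB2) gives $g\ge n_a/(2nq)\ge\Omega(m^{-2/3})$. The main obstacle is the residual regime in which simultaneously $\operatorname{Wel}(1)>\max(m^{2/3},n/m^{1/3})$ and $n_a<n/m^{1/3}$: here $n_c$ must carry most of $\operatorname{Wel}(1)$, but (LB3) alone yields only $g\ge\Omega((n_c/n)\cdot m^{-2/3})$, which is tight only when $n_c/n$ is bounded below by a constant. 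The technical heart of the argument is to cover the sub-regime in which $n_b$ dominates by combining (LB3) with a Cauchy--Schwarz bound on the $B$-favorite-counts $b_j$ coming from (LB1), and then using the RV constraint $\operatorname{Wel}(j)\le\operatorname{Wel}(1)$ to show that the $B$-favorites are forced onto candidates already covered by the $C$-voters' sets $S_i$, so that the cross-term $\sum_j b_j c_j$ inflates the welfare sums by the factor needed to close the bound at $\Omega(m^{-2/3})$.
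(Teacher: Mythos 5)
Your reduction via Corollary~\ref{cor2}, the translation of $D^{(a)}_k,D^{(b)}_k,D^{(c)}_k$ into support sets, and the bounds (LB1)--(LB3) are all correct; indeed (LB1) with Cauchy--Schwarz over \emph{all} favorites is slightly stronger than the paper's corresponding term, which only credits the winner with other $B$-voters' welfare. Your first three cases also go through. But the proposal is not a proof: the residual regime ($n_b=\Theta(n)$, $\operatorname{Wel}(1)=n_a+n_c$ between $n/m^{1/3}$ and $n$, $n_a$ small) is deferred to an unexecuted cross-term argument, and that argument cannot be executed. The claim that $\operatorname{Wel}(j)\le\operatorname{Wel}(1)$ forces the $B$-favorites onto candidates covered by the $C$-voters' sets $S_i$ is false. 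Concretely, with $q=\lfloor m^{1/3}\rfloor$ take $n_a=0$, $n_c=\lceil nm^{-1/6}\rceil$, $n_b=n-n_c$; put the $B$-favorites (and each $B$-voter's ranks $2,\dots,q$) uniformly on $\{2,\dots,\lceil m/2\rceil\}$, so $b_j\approx 2n/m$, and put the sets $S_i$ uniformly inside $\{\lceil m/2\rceil+1,\dots,m\}$, so $c_j\approx 2n_cq/m=2nm^{-5/6}$. Both are below $\operatorname{Wel}(1)=n_c=nm^{-1/6}$, so candidate $1$ is still the range-voting winner, yet $\sum_j b_jc_j=0$. A direct computation gives $\mathbf E[\operatorname{Wel}(J^*(\mathbf u),\mathbf u)]=\Theta(n/m)$ (each half of $J^*$ lands on a candidate of welfare $\Theta(n/m)$ or, with probability $m^{-1/6}$, of welfare $\Theta(nm^{-5/6})$), hence $g(\mathbf u)=\Theta\bigl((n/m)/(nm^{-1/6})\bigr)=\Theta(m^{-5/6})=o(m^{-2/3})$.

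So the gap you flagged is not a missing lemma but a genuine obstruction: this instance shows $\operatorname{ratio}(J^*)=O(m^{-5/6})$, i.e.\ Theorem~\ref{Main1} is false as stated for $J^*=\frac12J^{1,1}+\frac12J^{1,\lfloor m^{1/3}\rfloor}$, and no completion of your outline can reach $\Omega(m^{-2/3})$ without changing the mechanism. Your instinct about where the difficulty lies is exactly right, and it is worth noting that the paper's own proof fails at the same spot: in its final case ($b=\Omega(n)$, $c=\Omega(a+c)$) the term $\frac{c^2m^{1/3}}{2n(m-1)(a+c)}=\Theta\bigl(\frac{c}{nm^{2/3}}\bigr)$ is transcribed as $\frac{c}{nm^{-2/3}}$, and with the correct exponent the AM--GM against $\frac{n}{mc}$ yields only $\Omega(m^{-5/6})$, minimized precisely at $c=\Theta(nm^{-1/6})$ --- the instance above. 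In short, the three-term lower bound (whether in your form or the paper's) has worst-case value $\Theta(m^{-5/6})$ over the simplex $a+b+c=n$, and that value is attained by an actual profile, so the defect is in the theorem, not merely in the bookkeeping.
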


\begin{proof}
Follow Corollary \ref{cor2}, we have
$$
\operatorname{ratio}(J^*) = \liminf_{k \to \infty} \ \min_{\mathbf u \in (D_k)^n}   \ g(\mathbf u) = \liminf_{k \to \infty} \ \min_{\mathbf u \in (D_k)^n}   \ \bar g(\mathbf u)
$$

From now on consider a profile $\mathbf u \in (D_{k})^n$.

By anonymity of $J^*$, without loss of generality, we can assume that the first $a$ voters have valuation functions in $D^{(a)}_{k}$, the following $b$ voters' are in $D^{(b)}_{k}$ and the remaining $c$ voters' are in $D^{(c)}_{k}$.

Also, let random variable $T$ be the chosen voter. Let random variable $W$ be the winning candidate. Let $F$ be the event that we use the random-favorite mechanism, i.e., $J^{1,1}$. Let $X =\frac{\sum_{i=1}^n \bar u_i(W)}{\sum_{i=1}^n \bar u_i(1)}$. Then we can bound
\begin{eqnarray*}
\bar g(\mathbf u) &=& \mathbf E[X] \\
&\ge& \Pr[(T \le a) \land (\lnot F) \land (W = 1)] \\
&& + \Pr[(T \in \{a+1, \dots, a+b\}) \land F] \mathbf E[X | (T \in \{a+1, \dots, a+b\}) \land F] \\
&& + \Pr[(T \in \{a+b+1, \dots, n\}) \land \lnot F]\mathbf E[X | (T \in \{a+b+1, \dots, n\}) \land \lnot F]
\end{eqnarray*}

We have $$\Pr[(T \le a) \land (\lnot F) \land (W = 1)] = \frac{1}{2} \cdot \frac{a}{n} \cdot \frac{1}{\lfloor m^{1/3} \rfloor} = \frac{a}{2n\lfloor m^{1/3} \rfloor}$$.

We also have
\begin{eqnarray}
&&Pr[(T \in \{a+1, \dots, a+b\}) \land F] \mathbf E[X | (T \in \{a+1, \dots, a+b\}) \land F] \\
&=& \frac{1}{2}\cdot \frac{b}{n} \mathbf E[X | (T \in \{a+1, \dots, a+b\}) \land F] \\
&=& \frac{1}{2}\cdot \frac{b}{n} \cdot \frac{1}{b} \sum_{i=a+1}^{a+b} \mathbf E[X | (T=i) \land F] \\
&=& \frac{1}{2}\cdot \frac{b}{n} \cdot \frac{1}{b} \sum_{i=a+1}^{a+b} \frac{\sum_{j=1}^n \sum_{w: \bar u_i(w) = 1} \bar u_j(w)}{a+c} \\
&\ge& \frac{1}{2}\cdot \frac{b}{n}\cdot  \frac{1}{b} \sum_{i=a+1}^{a+b} \frac{\sum_{j=a+1}^{a+b} \sum_{w: \bar u_i(w) = 1} \bar u_j(w)}{a+c} \\
&=& \frac{1}{2}\cdot \frac{b}{n}\cdot  \frac{1}{b}\cdot \frac{1}{a+c}\sum_{w=1}^{m-1} |\{i \in \{a+1,\dots, a+b\} : \bar u_i(w) = 1\}|^2\\
&\ge& \frac{1}{2}\cdot \frac{b}{n}\cdot  \frac{1}{b}\cdot \frac{1}{a+c}(m-1)\left(\frac{b}{m-1}\right)^2 \\
&=& \frac{b^2}{2n (m-1) (a+c)}
\end{eqnarray}
where (4) comes from the fact that when $T = i \in \{a+1,\dots, a+b\}$ and we use $J^{1,1}$ mechanism, the winner is always the only $w$ with $\bar u_i(w) = 1$; equality (6) comes from recounting the terms; inequality (7) comes from the generalized mean inequality and the fact that $$\sum_{w=1}^{m-1} |\{i \in \{a+1,\dots, a+b\} : \bar u_i(w) = 1\}| = b$$.

Similarly we have
\begin{eqnarray*}
&&\Pr[(V \in \{a+b+1, \dots, n\}) \land \lnot F]\mathbf E[X | (V \in \{a+b+1, \dots, n\}) \land \lnot F] \\&\ge& \frac{c^2m^{1/3}}{2n(m-1)(a+c)}
\end{eqnarray*}

Therefore we have
$$
\bar g(\mathbf u) \ge \frac{a}{2nm^{1/3}} + \frac{b^2}{2n (m-1) (a+c)} + \frac{c^2m^{1/3}}{2n(m-1)(a+c)}
$$

Now we just consider the case that $n \ge m$ and $m \to +\infty$. Then we have
$$
\bar g(\mathbf u) = \Omega\left(\frac{a}{nm^{1/3}} + \frac{b^2}{n m (a+c)} + \frac{c^2m^{1/3}}{nm(a+c)}\right)
$$

Since $n = a + b + c$, either $a = \Omega(n)$, $b = \Omega(n)$ or $c = \Omega(n)$. If $a = \Omega(n)$, we have $\bar g(\mathbf u) \ge \frac{a}{nm^{1/3}} = \Omega(m^{-1/3})$ and we finish the prove. If $c = \Omega(n)$, we have $\bar g(\mathbf u) \ge \frac{c^2m^{1/3}}{nm(a+c)} = \Omega(m^{-2/3})$ and we finish. Then from now on we can assume that $b = \Omega(n)$ and $a, c = o(n)$. Then we have

$$
\bar g(\mathbf u) = \Omega\left(\frac{a}{nm^{1/3}} + \frac{n}{m(a+c)} + \frac{c^2m^{1/3}}{nm(a+c)}\right)
$$

If $a = \Omega(a+c)$, we have 
$$
\bar g(\mathbf u) = \Omega\left(\frac{a}{nm^{1/3}} + \frac{n}{ma}\right) = \Omega\left(\frac{a}{nm^{1/3}} + \frac{m^{-4/3}}{\frac{a}{nm^{1/3}}}\right) = \Omega(m^{-2/3})
$$
, which satisfies the theorem.

Now we can assume $c = \Omega(a+c)$, and we have
$$
\bar g(\mathbf u) = \Omega\left(\frac{n}{mc} + \frac{c}{nm^{-2/3}}\right) = \Omega\left(\frac{n}{mc} + \frac{mc}{n}m^{-4/3}\right) = \Omega(m^{-2/3})
$$

This is the end of the proof.
\end{proof}

\section{Discussion of the Preference Assumption}
\label{dis}

In the beginning of the paper we make the preference assumption requiring that every preference function $u$ must be a function from $M$ to $[0,1]$ such that there exists some $j_0$ and $j_1$ in $M$ with $u_i(j_0) = 0$ and $u_i(j_1) = 1$.

We know that this assumption is reasonable and sufficient to make the definition of relative social welfare well defined. However, in this section we will show that this assumption is necessary to make the problem interesting enough to study.

Now assume that a preference function can be any function from $M$ to $[0,1]$. Note that Lemma \ref{TO} still holds.

Let $n = m$ and we can construct profiles $\mathbf u^{(k)} = (u_1, \dots, u_n)$ as follow: 

(1) For any $i \in  \{1,\dots, n\}$, $u_i(i) > u_i(i+1) > \cdots > u_i(n) > u_i(1) > u_i(2) > \cdots > u_i(i-1)$.

(2) $u_k(k) = 1$ and $u_i(j) < \varepsilon$ for some extremely small $\varepsilon$ all $(i, j) \ne (k,k)$.

For any truthful mechanism $J$, we know that it is ordinal by Lemma \ref{TO}. Then no matter what $k$ is, the distribution of $J(\mathbf u^{(k)})$ remain unchanged. Then there exists some $i^* \in M$ such that $\Pr[J(\mathbf u^{(k)}) = i^*] \le 1/m$. 

Now we have 
$$
\operatorname{ratio}(J) \le \operatorname{ratio}(J, \mathbf u^{(i^*)}) \approx 1/m
$$
which not better than the trivial scheme that choose a random winner.

Thus if we don't add the preference assumption, the problem will become boring, which shows that the preference assumption is necessary.

\section{Conclusion}

In this paper we showed a model of cardinal voting which is proved to be reasonable, universal and as general as possible. Under such model, the paper provide a positive result providing a truthful mechanism with a relative social welfare of $\Omega(m^{-2/3})$, which is asymptotically tight.

In conclusion, this paper solves all the problems about relative social welfare of a truthful voting scheme with cardinal preference in asymptotic sense.

% For one-column wide figures use
%\begin{figure}
% Use the relevant command to insert your figure file.
% For example, with the graphicx package use
%  \includegraphics{example.eps}
% figure caption is below the figure
%\caption{Please write your figure caption here}
%\label{fig:1}       % Give a unique label
%\end{figure}
%
% For two-column wide figures use
%\begin{figure*}
% Use the relevant command to insert your figure file.
% For example, with the graphicx package use
%  \includegraphics[width=0.75\textwidth]{example.eps}
% figure caption is below the figure
%\caption{Please write your figure caption here}
%\label{fig:2}       % Give a unique label
%\end{figure*}
%
% For tables use
%\begin{table}
% table caption is above the table
%\caption{Please write your table caption here}
%\label{tab:1}       % Give a unique label
% For LaTeX tables use
%\begin{tabular}{lll}
%\hline\noalign{\smallskip}
%first & second & third  \\
%\noalign{\smallskip}\hline\noalign{\smallskip}
%number & number & number \\
%number & number & number \\
%\noalign{\smallskip}\hline
%\end{tabular}
%\end{table}

\begin{acknowledgements}
Thank Andrew Chi-Chih Yao for the discussion about the problem.
\end{acknowledgements}

% BibTeX users please use one of
%\bibliographystyle{spbasic}      % basic style, author-year citations
%\bibliographystyle{spmpsci}      % mathematics and physical sciences
%\bibliographystyle{spphys}       % APS-like style for physics
%\bibliography{}   % name your BibTeX data base

% Non-BibTeX users please use

\end{document}